\definecolor{linkcolor}{rgb}{0,0,0} 
\UseAllTwocells \SelectTips{eu}{10} 
\theoremstyle{plain}
\newtheorem{theorem}{Theorem}
\newtheorem{pardgm}[theorem]{Paradigm}
\newtheorem{remark}[theorem]{Remark}
\newtheorem{proposition}[theorem]{Proposition}
\newtheorem{corollary}[theorem]{Corollary}
\newtheorem{lemma}[theorem]{Lemma}
\theoremstyle{definition}
\newtheorem{definition}[theorem]{Definition}
\newcommand{\set}[1]{\left\lbrace\, #1 \,\right\rbrace}
\newcommand{\IR}{{\mathbb{R}}}
\newcommand{\abst}[1]{\, #1 \,} 
\newcommand{\mytabvspace}{\vphantom{${X^X}^X$}} 
\newcommand{\Comment}[1]{\tcc*[r]{#1}}   
\newcommand{\IfComment}[1]{\tcc*[f]{#1}} 
\newcommand{\algoand}{\textbf{ and }} 
\newcommand{\pforest}{\texttt{p4est}\xspace} 
\newcommand{\tetcode}{\texttt{t8code}\xspace} 
\newcommand{\adef}{\, {:=} \,}
\newcommand{\agl}{\abst{=}}
\newcommand{\indexset}[1]{\set{0,\dots,#1-1}} 
\newcommand{\ohne}{\backslash}
\title{Coarse mesh partitioning for tree based AMR}
\author{%
Carsten Burstedde\footnotemark[1]
\and
Johannes Holke\footnotemark[1]
}
\begin{document}

\maketitle

\thispagestyle{plain}

\renewcommand{\thefootnote}{\fnsymbol{footnote}}
 \footnotetext[1]{Institut f\"ur Numerische Simulation (INS) and
                 Hausdorff Center for Mathematics (HCM),\\%
                 Rheinische Friedrich-Wilhelms-Universit\"at Bonn, Germany}
\renewcommand{\thefootnote}{\arabic{footnote}}

\begin{abstract}
  In tree based adaptive mesh refinement, elements are partitioned between
  processes using a space filling curve.
  The curve establishes an ordering between all elements that derive from the
  same root element, the tree.
  When representing more complex geometries by patching together several trees,
  the roots of these trees form an unstructured coarse mesh.
  We present an algorithm to partition the elements of the coarse mesh such
  that (a) the fine mesh can be load-balanced to equal element counts per
  process regardless of the element-to-tree map and (b) each process that holds
  fine mesh elements has access to the meta data of all relevant trees.
  As an additional feature, the algorithm partitions the meta data of relevant
  ghost (halo) trees as well.
  We develop in detail how each process computes the communication pattern for
  the partition routine without handshaking and with minimal data movement.
  We demonstrate the scalability of this approach on up to 917e3 MPI ranks and
  .37e12 coarse mesh elements, measuring run times of one second or less.
\end{abstract}

\textbf {Keywords.}
  Adaptive mesh refinement,
  coarse mesh,
  mesh partitioning,
  parallel algorithms,
  forest of octrees,
  high per\-for\-mance computing

\textbf {AMS subject classification.}  65M50, 68W10, 65Y05, 65D18

\newcommand{\sZ}{\mathbb{Z}}
\newcommand{\ie}{i.e.\xspace}

\section{Introduction}

Adaptive mesh refinement (AMR) has a long tradition in numerical mathematics.
Over the years, the technique has evolved from a pure concept to a practical
method, where the transition may be located somewhere in the mid-80s; see, for
example, \cite{BergerOliger84}.
A second transition from serial to parallel computing, eventually parallelizing
the storage of the mesh itself, has occured around the turn of the millennium
(see e.g.\ \cite{GriebelZumbusch98}).
Different flavors of the approach have been studied, varying in the shape of
the mesh elements used (triangles, tetrahedra, quadrilaterals, cubes) and in
the logical grouping of the elements.

Elements may actually not be grouped at all and assembled into an unstructured
mesh, some recent references being \cite{NortonLyzengaParkerEtAl04,
ZhouSahniDevineEtAl10, RasquinSmithChitaleEtAl14}.
The connectivity of elements can be modeled as a graph, and the partitioning of
elements between parallel processes can be translated into partitioning the
graph.
Finding an approximate solution to this problem has been the subject of
extensive research, some of which has lead to the development of software
libraries \cite{KarypisKumar98, DevineBomanHeaphyEtAl02, ChevalierPellegrini08}.
In practice, the graph approach is often augmented by diffusive migration of
elements.
All in all, partitioning times of roughly 1e3 to 1e4 elements per second have
been measured \cite{CatalyurekBomanDevineEtAl07, SmithRasquinIbanezEtAl15,
CoupezSilvaDigonnet16}.
We may thus think of approaching the partitioning problem differently, trading
the generality of the graph for a mathematical structure that offers rates of
maybe 1e5 to 1e6 elements per second.

When we group elements by their size $h$, a particularly strict ansatz is the
hierarchical, $h = 2^{-\ell}$, using some refinement level $\ell \in \sZ$.
For square/cubic elements, we may introduce a rectangular grid of equal-sized
ones and reduce the assembly of the mesh to the question of arranging such
grids relative to each other.
Block-structured methods do just that in varying instances of generality, some
allowing free rotation and overlapping of blocks of any level
\cite{sk-ol-st:1989}, others imposing strict rules of assembly (such as not
allowing rotation, but only shifts in multiples of $h$ \cite{be-co:1989}).
Another such rule interprets elements as nodes of a tree, where the level
$\ell$ takes on a second meaning as the depth of a node below the root
\cite{RheinboldtMesztenyi80}.

Tree-based AMR can be implemented for any shape of element, where special 2D
solutions exist \cite{MeisterRahnemaBader16} as well as the popular
rectangles (2D) and cubes (3D) approach.
The tree root is identified with the largest possible element and thus inherits
its shape as a volume in space.
This points directly at a practical limitation: how to represent domain
geometries that have more complex shapes?
One answer is to consider multiple tree roots and arrange them in the fashion
of unstructured AMR, giving rise to a forest of elements
\cite{SteinmanMilnerNorleyEtAl03, StewartEdwards04, BangerthHartmannKanschat07}.
The mesh of trees is sometimes called the coarse mesh, which is created a
priori to map the topology and geometry of the domain with sufficient fidelity.
Elements may then be refined and coarsened recursively, changing the mesh below
the root of each tree.

Many simulations require one tree only, and the concept of the forest is not
called upon.
Popular forest AMR codes respect this fact by operating with no or minimal
overhead in the special case of a one-tree forest.
When multiple trees are needed, their number is limited by the available memory
of contemporary computers to roughly a million per process
\cite{BursteddeWilcoxGhattas11}.
While this number allows to execute most coarse meshing tasks in serial, we may
still ask how to work with coarse meshes of say a billion or more trees total,
such numbers being common in industrial and medical unstructured meshing.
In this case, the coarse mesh needs to be partitioned between the processes,
either by means of file I/O \cite{Bryan16} or in-core, as we will discuss
in this paper.

Most tree-based AMR codes make use of a space filling curve to order the
elements within a tree \cite{GriebelZumbusch02, TuOHallaronGhattas05,
SundarSampathBiros08} as well as points or other primitives
\cite{RahimianLashukVeerapaneniEtAl10}.
Two main approaches for partitioning a forest of elements have been discussed
\cite{Zumbusch03}, namely
(a) assigning each tree and thus all of its elements to one owner process
\cite{SelwoodBerzins99, BurriDednerKloefkornEtAl06}
or
(b) allowing a tree to contain elements belonging to multiple processes
\cite{BangerthHartmannKanschat07, BursteddeWilcoxGhattas11}.
The first approach offers a simpler logic, but may not provide acceptable
load balance when the number of elements differs vastly between trees.
The second allows for perfect partitioning of elements by number (the local
numbers of elements between processes differ by at most one), but presents the
issue of trees that are shared between multiple processes.

We choose paradigm (b) for speed and scalability, challenging us to solve an
$n$-to-$m$ communication problem for every coarse mesh element.
The objective of this paper is thus to develop how to do this without
handshaking (\ie, without having to determine separately
which
process receives from which), and with a minimal number of senders, receivers,
and messages.
One corner stone is to avoid identifying a single owner process for each tree,
but rather to treat all
its sharer processes as algorithmically active under the premise that they
produce a disjoint union of the information necessary to be transferred.
In particular, each process shall store the relevant tree meta data to be
readily available, eliminating the need to transfer this data from a single
owner process.

In this paper, we also integrate the parallel transfer of ghost trees.
The reason is that each process will eventually collect ghost elements, \ie,
remote elements adjacent to its own.
Although we do not discuss such an algorithm here, we note that ghost
elements of any process may be part of trees that are not in its local set.
To disconnect the ghost algorithm from identifying and transfering ghost trees,
we perform this as part of the coarse mesh partitioning, presently across tree
faces.
We study in detail what information we must maintain to reference neighbor
trees of ghost trees (that may themselves be either local, ghost, or neither)
and propose an algorithm with minimal communication effort.

We have implemented the coarse mesh partitioning for triangles and tetrahedra
using the SFC designed in \cite{BursteddeHolke16}, and for quadrilaterals and
cubes exploiting the logic from \cite{BursteddeWilcoxGhattas11}.
To demonstrate that our algorithms are safe to use, we verify that (a) small
numbers of trees require run times on the order of milliseconds and thus
present no noticeable overhead compared to a serial coarse mesh, and that (b)
the coarse mesh partitioning adds only a fraction of run time compared to the
partitioning of the forest elements, even for extraordinarily large numbers of
trees.
We show a practical example of 3D dynamic AMR on 8e3 cores using 383e6 trees
and up to 25e9 elements.
To investigate the ultimate limit of our algorithms, we partition coarse meshes
of up to .37e12 trees on a Blue Gene/Q system using 458e3 cores, obtaining a
total run time of about 1s, a rate of 7e5 trees per second.

We may summarize our results by saying that partitioning the trees can be made
no less costly than partitioning the elements, and often executes so fast that
it does not make a difference at all.
This allows a forest code that partitions both trees and elements dynamically
to treat the whole continuum of forest mesh
scenarios, from one tree with nearly trillions of elements on the one extreme
to billions of trees that are not refined at all on the other, with comparable
efficiency.

\section{Forest based AMR}

We understand the connectivity of a forest as a mesh of root elements, which is
effectively the coarsest possible mesh.
We will simply call it \emph{coarse mesh} in the following.
Each root element may be subdivided recursively into elements, replacing one
element by four (triangles and quadrilaterals) or eight (tetrahedra and cubes).
For our purposes, the subdivision may be chosen freely by an application.
Thus, in our context an element has two roles: one geometric as a volume in
space and one logical as node of a tree.
The leaf elements of the forest compose the \emph{forest mesh} used for
computation, see also Figures~\ref{fig:hybridmesh} and \ref{fig:twotreeforest}.
The (leaf) elements may be used in a classical finite element/finite volume/dG
setting or as a meta structure, for example to manage a subgrid in each leaf
element \cite{DreherGrauer05, SchiveTsaiChiueh10, BursteddeCalhounMandliEtAl14,
MaxwellKolletSmithEtAl16}.

The order of elements is established first by tree and then by their index
with respect to a space filling curve (SFC, see also \cite{Bader12}):

We enumerate the $K$ trees of the coarse mesh by $0,\dots,{K-1}$ and
call the number $k$ of a tree its \emph{global index}.
With the global index we naturally extend the SFC order of the leaves:
Let a leaf element of the tree k have SFC index $I$ (within that tree), then we
define the combined index $(k, I)$.
This index compares to a second index $(k', J)$ as
\begin{equation}
  (k, I) < (k', J) \quad :\Leftrightarrow \quad
  \text{$k < k'$ or ($k = k'$ and $I < J$)}
  .
\end{equation}
In practice we store the mesh elements local to a process in one contiguous
array per locally non-empty tree in precisely this order.

\begin{figure}
\center
\includegraphics[width=0.45\textwidth]{./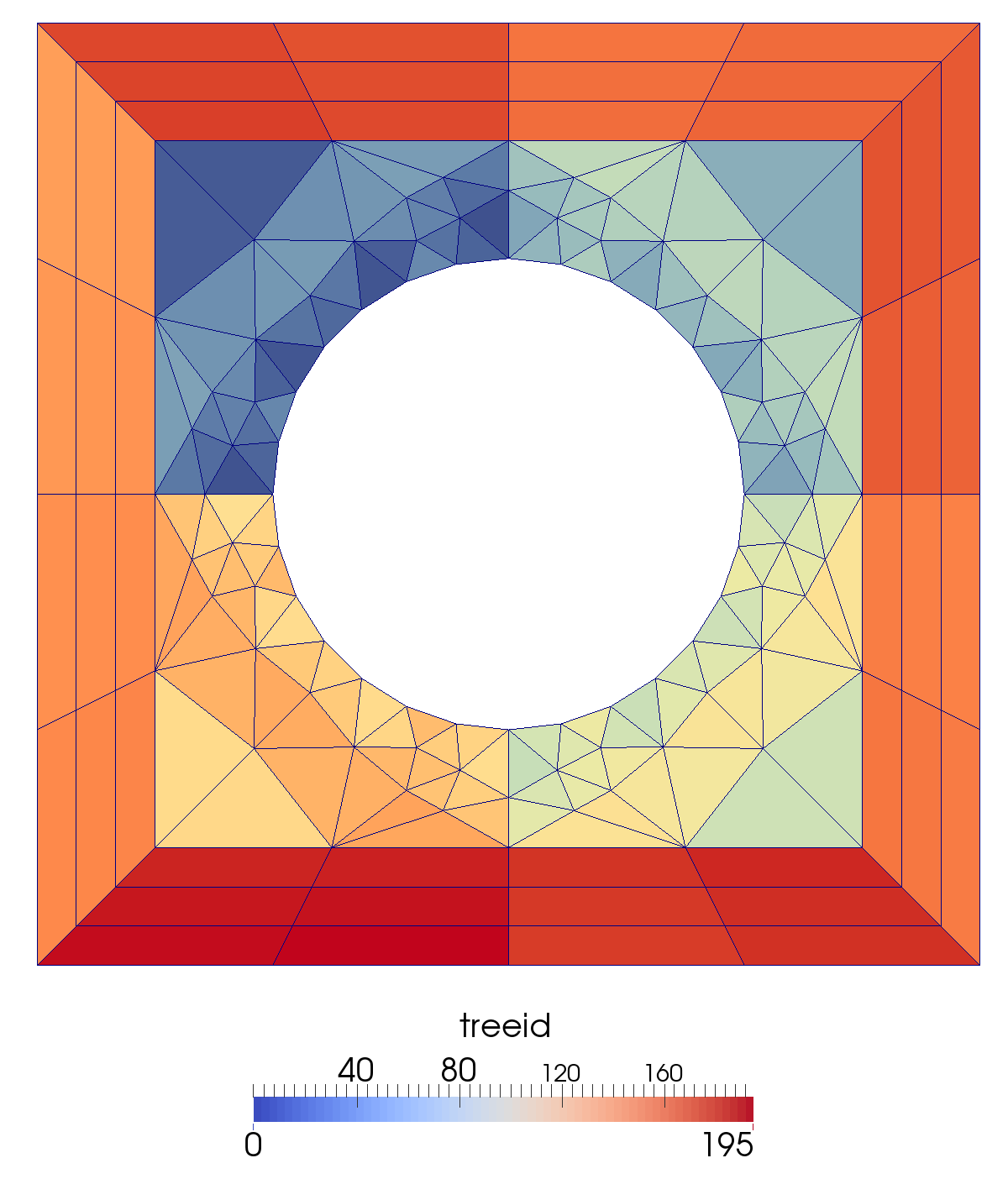}
\includegraphics[width=0.45\textwidth]{./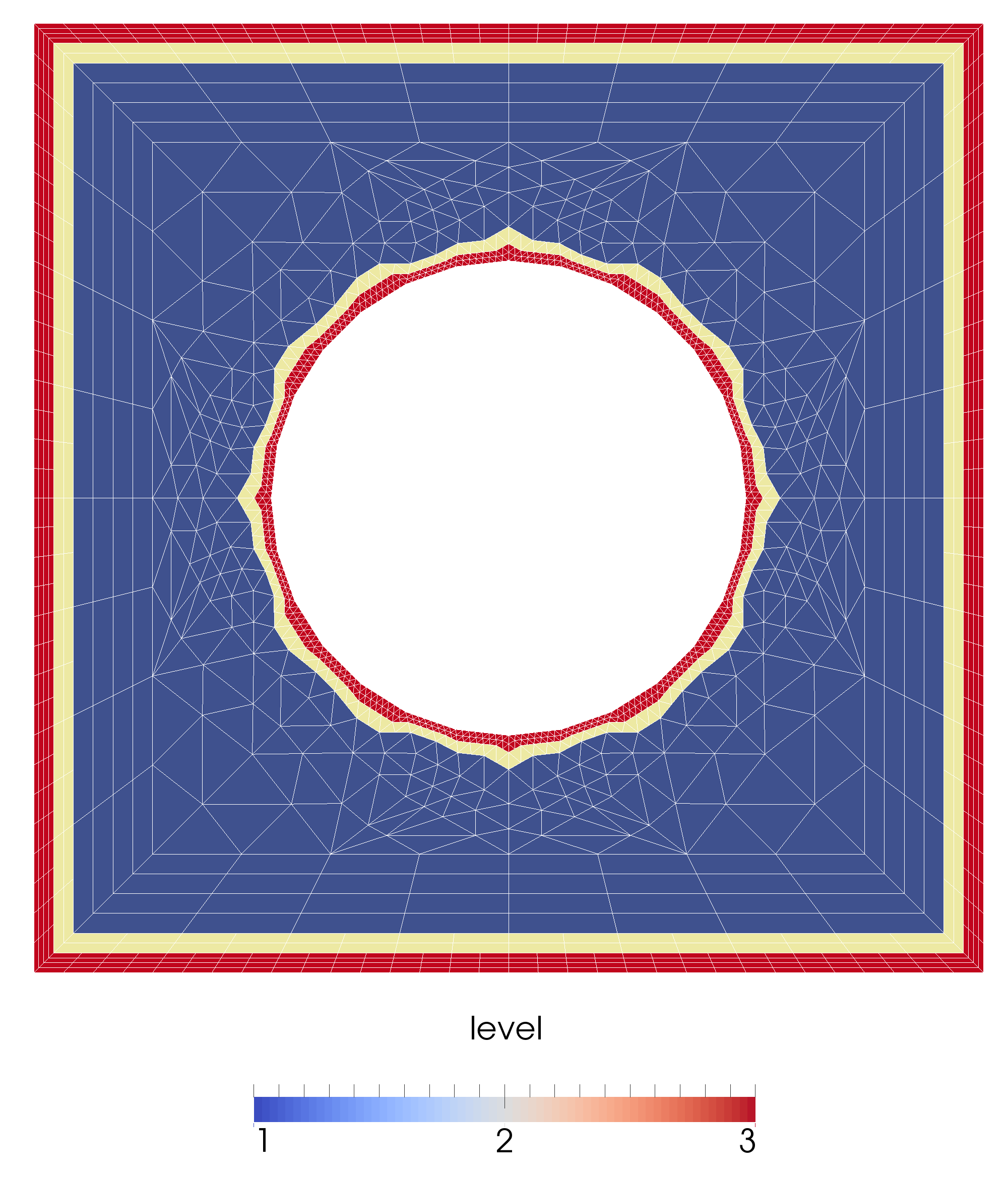}
\caption{A mesh consists of two structures, the coarse mesh (left) that represents
the connectivity of the domain, and the forest mesh (right) that consists of the
leaf elements of a refinement and is used for computation.
In this example the domain is a unit square with a circular hole.
The color coding in the coarse mesh displays each tree's unique and consecutive
identifier, while the color coding in the forest mesh represents the refinement
level of each element.
In this example we choose an initial global level 1 refinement and a
refinement of up to level 3 along the domain boundary.}
\label{fig:hybridmesh}
\end{figure}

\subsection{The tree types and their connectivity}

The trees of the coarse mesh can be of arbitrary type as long as they 
are all of the same dimension and fit together along their faces.
In particular, we identify the following tree types:
\begin{itemize}
  \item Points in 0D.
  \item Lines in 1D.
  \item Squares and triangles in 2D.
  \item Cubes and tetrahedra in 3D.
  \item Prisms and pyramids in 3D. 
\end{itemize}
Coarse meshes consisting solely of prisms or pyramids are quite uncommon; these
tree types are used primarily to transition between cubes and tetrahedra in
hybrid meshes.
The choice of SFC affects the ordering of the leaves in the forest mesh and
thus the parallel partition of elements.
Possibilities include, but are not limited to, the Hilbert, Peano, or Morton
curves for cubes and squares \cite{Peano90, Sierpinski12, Hilbert91,
WeinzierlMehl11} as well as the Sierpi\'nski or tetrahedral Morton curves for
tetrahedra and triangles \cite{BaderZenger06,BursteddeHolke16}.
In the \tetcode software used for the demonstrations in this paper we have so
far implemented Morton SFCs for cubes and squares via the \pforest library
\cite{Burstedde15} and the tetrahedral Morton SFC for tetrahedra and triangles;
other schemes may be added in a modular fashion.

\subsection{Encoding of face-neighbors}
\label{sec:faceneigh}
The connectivity information of a coarse mesh includes the neighbor
relation between adjacent trees.
Two trees are considered neighbors if they share at least one lower
dimensional face (vertex, face, or edge).
Since all of this connectivity information can be concluded from codimension-1
neighbors, we restrict ourselves to those, denoting them uniformly by
face-neighbors.
At this point we do not consider neighbor relations across lower dimensional
tree faces, as we believe without loss of generality for the partitioning
algorithms to follow.

An application often requires a quick mechanism to access the face-neighbors of
a given forest mesh element.
If this neighbor element is a member of the same tree the computation can be
carried out via the SFC logic, which involves only few bitwise operations for
the cubical and tetrahedral Morton curves \cite{Morton66, SundarSampathBiros08,
BursteddeWilcoxGhattas11, BursteddeHolke16}.
If, however, the neighbor element belongs to a different tree, 
we need to identify this tree, given the parent tree of the original element
and the tree face at which we look for the neighbor element.
It is thus advantageous to store the face-neighbors of each tree in an array
that is ordered by the tree's faces.
To this end, we fix the enumeration of faces and vertices relative to each
other as depicted in Figure \ref{fig:vertexface}.
\begin{figure}
   \center
    \includegraphics{./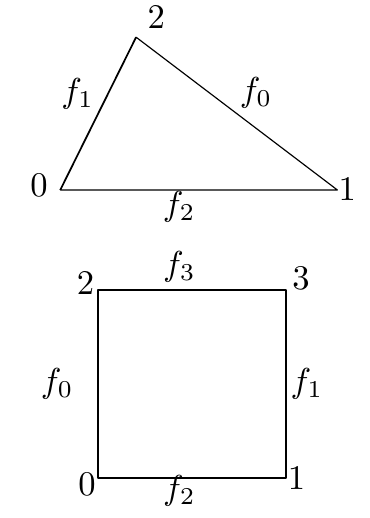}
    \includegraphics{./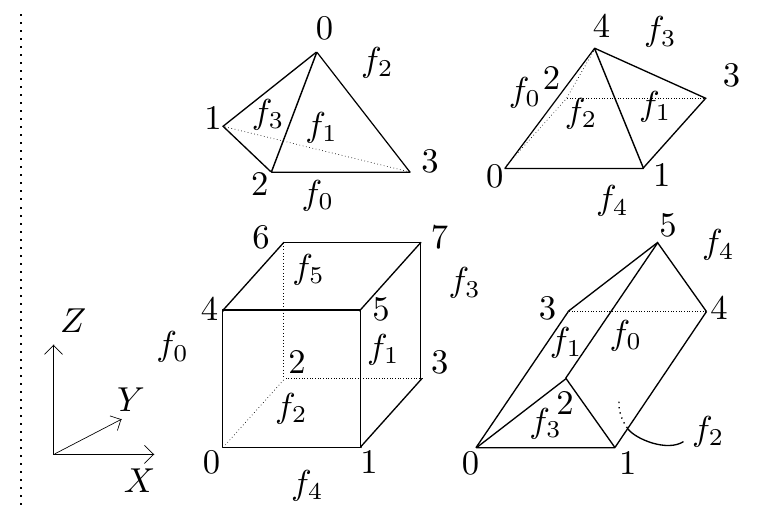}
   \caption{The vertex and face labels of the 2D (left) and 3D (right) tree types.}%
   \label{fig:vertexface}%
\end{figure}%

\subsection{Orientation between neighbors}
\label{sec:orientation}

In addition to the global index of the neighbor tree across a face, we
describe how the faces of the tree and its neighbor are rotated relative to
each other.
We allow all connectivities that can be embedded in a compact 3-manifold in
such a way that each tree has positive volume.
This includes the Moebius strip and Klein's bottle and also quite exotic
meshes, e.g.\ a cube whose one face connects to another in some rotation.
We obtain two possible orientations of a line-to-line connection, three for a
tri\-angle-to-triangle- and four for a square-to-square connection.

We would like to encode the orientation of a face connection analogously to the
way it is handled in \pforest:
At first, given a face $f$, its vertices are a subset of the vertices of the
whole tree.
If we order them accordingly and re-numerate them consecutively starting from
zero, we obtain a new number for each vertex that depends on the face $f$.
We call it the \emph{face corner number}.
If now two faces $f$ and $f'$ meet, the corner $0$ of the face with the
smaller face number is identified with a face corner $k$ in the other face. In
\pforest this $k$ is defined to be the orientation of the face connection.

In order for this operation to be well-defined, it must not depend on the
choice of the first face when the two face numbers are the same, which is
easily verified for a single tree type.
When two trees of different types meet, we generalize to
determine which face is the first one.

\begin{definition}
 We impose a semiorder on the 3-dimensional tree types as follows:
 \begin{equation}
 \begin{gathered}
 \xymatrix@C=4ex@R=1ex{ {Hexahedron} \ar@{}[rd]|-(0.6){\mathrel{\rotatebox{-20}{$<$}}} & \\
      & {Prism} \ar@{}[r]|-{<} & {Pyramid.} \\
      {Tetrahedron} \ar@{}[ru]|-{\mathrel{\rotatebox{20}{$<$}}} &
      }
 \end{gathered}
\end{equation}
\end{definition}
This comparison is sufficient since a hexahedron and a tetrahedron can never
share a common face.
We use it as follows.
\begin{definition}
 Let $t$ and $t'$ denote the tree types of two trees that meet at a common face
 with respective face numbers $f$ and $f'$.
 Furthermore, let $\xi$ be the face corner number of $f'$ matching corner $0$ of $f$ and
 $\xi'$ the face corner number of $f$ matching corner $0$ of $f'$.
 We define the orientation of this face connection as
 \begin{equation}
  \texttt{or} \adef \left\lbrace
  \begin{array}{ll}
    \xi  &  \text{if $t < t'$ or ($t = t'$ and $f \leq f'$)}, \\
    \xi' &  \text{otherwise}.
  \end{array}
  \right.
 \end{equation}
\end{definition}
We now encode the face connection in the expression $\texttt{or} \cdot F + f'$
from the perspective of the first tree and $\texttt{or} \cdot F + f$ from the
second, where $F$ is the maximal number of faces over all tree types of this
dimension.

\section{Partitioning the coarse mesh}

As outlined above, tree based AMR methods partition mesh elements with the help
of an SFC.
By cutting the SFC into as many equally sized parts as processes and assigning
part $i$ to process $i$, the repartitioning process is distributed and runs in
linear time.
Weighted partitions with a user defined weight per leaf element are also
possible and practical \cite{PinarAykanat04, BursteddeWilcoxGhattas11}.

If the physical domain has a complex shape such that many trees
are required to optimally represent it, it becomes necessary to also
partition the coarse mesh in order to reduce the memory footprint.
This is even more important if the coarse mesh does not fit into the memory of
one process, since such problems are not even computable without coarse mesh
partitioning.

\begin{figure}
\center
\includegraphics[width=\textwidth]{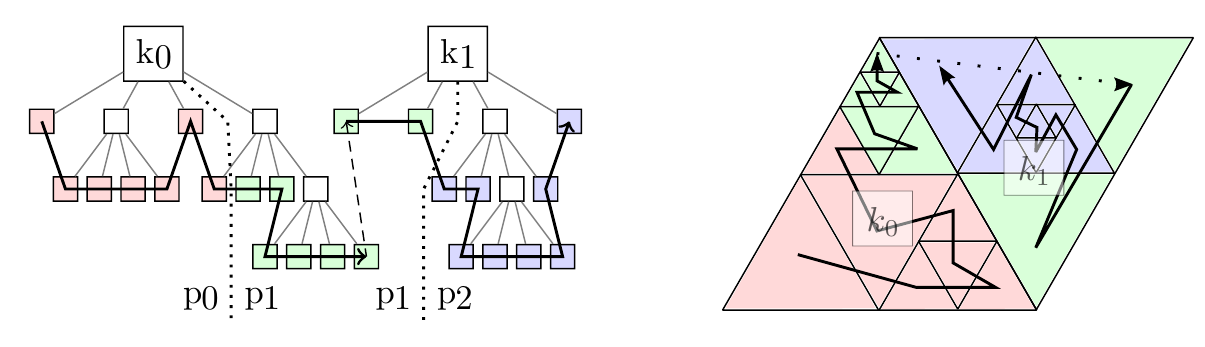}
\caption{We patch multiple trees together to model complex geometries. Here, we 
show two trees $k_0$ and $k_1$ with an adaptive refinement.
To store the forest mesh, we use a space-filling curve within each tree
and the a-priori tree order between trees.
On the left-hand side of the figure the refinement tree and its linear storage
are shown. When we partition the forest mesh to $P$ processes (here, $P=3$),
we cut the SFC in $P$ equal-size parts and assign part $i$ to process $i$.}
\label{fig:twotreeforest}
\end{figure}

Suppose the forest mesh is partitioned among the processes.
Since the forest mesh frequently references connectivity information from the
coarse mesh, a process that owns a leaf $e$ of a tree $k$ also needs the
connectivity information of the tree $k$ to its neighbor trees.
Thus, we maintain information on these so-called ghost trees.

There are two traditional approaches to partition the coarse mesh.
In the first approach
\cite{BurriDednerKloefkornEtAl06, YilmazOezturanTosunEtAl10}, the coarse mesh
is partitioned, and the owner process of a tree will own all elements of that
tree.
In other words, it is not possible for two different processes to own elements
of the same tree, which can lead to highly imbalanced forest meshes.
Furthermore, if there are less trees then processes, there will be idle
processes without any leaf elements.
In particular, this prohibits the frequently occurring special case of a single
tree.

The second approach \cite{Zumbusch03} is to first partition the forest mesh and
to deduce the coarse mesh partition from that of the forest.
If several
processes have leaf elements from the same tree, then the tree is assigned to
one of these processes, and whenever one of the other processes requests
information about this tree, communication is invoked.
This technique has the advantage that the forest mesh is load balanced much
better, but it introduces additional synchronization points in the program and
can lead to critical bottlenecks if a lot of processes request information on
the same tree.

We propose another approach, which is a variation of the second, that overcomes
the communication issue.
If several processes have leaf elements from the same tree, we duplicate this
tree's connectivity data and store a local copy of it on each of this subset
of processes.
Thus, there is no further need for communication and each process has exactly
the information it requires.
Since the purpose of the coarse mesh is not to store data that changes during
the simulation, but to store connectivity data about the physical domain the
data on each tree is persistent and does not change during the simulation.
Certainly, this concept poses an additional challenge in the (re)partitioning
process, because we need to manage multiple copies of trees without producing
redundant messages.

As an example consider the situation in Figure \ref{fig:2trees3proc}.
Here the 2D coarse mesh consists of two triangles $0$ and $1$, and the forest
mesh is a uniform level 1 mesh consisting of 8 elements. Elements $0,1,2,3$ are
the children of tree $0$ and elements $4,5,6,7$ the children of tree $1$.  If
we load-balance the forest mesh to three processes with ranks $0$, $1$ and $2$,
then a possible forest mesh partition arising from a SFC could be\\
\begin{minipage}{0.35\textwidth}
\begin{equation}
  \begin{array}{cl}
    \mathrm{rank}&\mathrm{elements}\\\hline
    0 & 0,\, 1,\, 2\\
    1 & 3,\, 4,\, 5\\
    2 & 6,\, 7,
  \end{array}
\end{equation}
\end{minipage}
\begin{minipage}{0.3\textwidth}
leading to the coarse mesh partition
\end{minipage}
\hfill
\begin{minipage}{0.3\textwidth}
\begin{equation}
  \begin{array}{cl}
    \mathrm{rank}& \mathrm{trees} \\\hline
    0&0\\
    1&0,\,1\\
    2&1.
  \end{array}
\end{equation}
\end{minipage}\\[3ex]
Thus, each tree is stored on two processes.

\begin{figure}
   \center
\begin{minipage}{0.8\textwidth}
   \center
  \includegraphics{./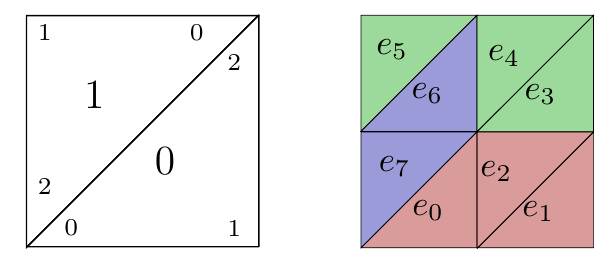}
\end{minipage}
   \caption{A coarse mesh of two trees and a uniform level 1 forest mesh.
            If the forest mesh is partitioned to three processes, each tree of the 
            coarse mesh is requested by two processes.
            The numbers in the tree corners denote the position and orientation of the tree vertices
            and the global tree ids are the numbers in the center of each tree.
            As an example SFC we take the triangular Morton curve from
            \cite{BursteddeHolke16}, but the situation of multiple trees per
            process can occur for any SFC.}
   \label{fig:2trees3proc}
\end{figure}

\subsection{Valid partitions}

We allow any SFC induced partition for the forest.
This gives us some information on the type of coarse mesh partitions that we
can expect as follows.

\begin{definition}
  In general a \emph{partition} of a coarse mesh of $K$ trees $\indexset{K}$
to $P$ processes $\indexset{P}$ is a map $f$ that assigns each process a
certain subset of the trees,
\begin{equation}
  f\colon \indexset{P} \longrightarrow \mathcal{P}\indexset{K},
\end{equation}
and whose image covers the whole mesh:
\begin{equation}
\bigcup_{p=0}^{P-1} f(p) = \indexset{K}.
\end{equation}
Here, $\mathcal{P}$ denotes the set of all subsets (power set).
We call $f(p)$ the \emph{local trees} of process $p$ and explicitly allow that
$f(p)\cap f(q)$ may be nonempty. 
If so, the trees in this intersection are \emph{shared} between processes $p$ and $q$.
\end{definition}

For a partition $f$ we denote with  ${k_p}$ the local tree on $p$ with lowest
global index and with ${K_p}$ the one with the highest global index.

We need not consider all possible partitions of a coarse mesh. Since we assume
that a forest mesh partition comes from a space-filling curve, we
can restrict ourselves to a subset of partitions.
\begin{definition}
  \label{def:valid}
  Consider a partition $f$ of a coarse mesh with $K$ trees.
  We say that $f$ is a \emph{valid partition} if there exists a forest
  mesh with $N$ leaves and a (possibly weighted) SFC partition of it that induces $f$.
  Thus, for each process $p$ and
  each tree $k$ we have $k\in f(p)$ if and only if there exists a leaf $e$ of
  the tree $k$ in the forest mesh that is partitioned to process $p$.
  Processes without any trees are possible; in this case $f(p) =
\emptyset$.  
\end{definition}

We deduce three properties that characterize valid partitions and lead to a 
definition that is independent of a forest mesh.

\begin{proposition}
  \label{prop:valid}
  A partition $f$ of a coarse mesh is valid if and only if it fulfills the following 
  properties.
\begin{enumerate}[(i)]
\item The tree indices of a process' local trees are consecutive, thus
  \begin{equation}
  f(p) = \set{{k_p},{k_p+1},\dots,{K_p}},\,\text{ or }\,f(p) =\emptyset.\label{eq:valid1}
 \end{equation}
\item A tree index of a process $p$ may be smaller than a
      tree index on a process $q$ only if $p\leq q$:
      \begin{equation}
        p \leq q  \Rightarrow  K_p \leq k_q \quad(\text{if}\,\,
        f(p)\neq\emptyset\neq f(q)).\label{eq:valid2} 
      \end{equation}
 \item The only trees that can be shared by process $p$ with other processes
   are $k_p$ and $K_p$:
   \begin{equation}
  f(p) \cap f(q) \subseteq \set{{k_p},{K_p}},\, \text{for}\,\, p\neq q.\label{eq:valid3}
   \end{equation}
\end{enumerate}
\begin{proof}
  We show the only-if direction first.
  So let an arbitrary forest mesh with SFC partition be given, such that $f$ is
  induced by it. In the SFC order the leaves are sorted according to their SFC
  indices.
  If ${(i,I)}$ denotes the leaf corresponding to the $I$-th leaf in the $i$-th
  tree and tree $i$ has $N_i$ leaves,
  then the complete forest mesh consists of the leaves
  \begin{equation}
    \set{{(0, 0)},{(0, 1)},{(0, N_0-1)},{(1, 0)},\dots,{(K-1, N_{K-1}-1)}}.
  \end{equation}
  The partition of the forest mesh is such that each process 
  $p$ gets a consecutive range
  \begin{equation}
    \set{{(k_p, i_p)},\dots,{(K_p, i'_{p})}}
  \end{equation}
  of leaves.
  Where $(k_{p+1}, i_{p+1})$ is the successor of $(K_p, i'_p)$.
  and the $k_p$ and $K_p$ form increasing sequences
  with additionally $K_p\leq k_{p+1}$.
  The coarse mesh partition is then given by
  \begin{equation}
    f(p) = \set{k_p,k_p+1,\dots,K_p},\,\, \text{for all } p,
  \end{equation}
  which shows properties $(i)$ and $(ii)$.
  To show $(iii)$ we assume that $f(p)$ has at least three elements, thus
  $f(p) = \set{{k_p},{k_p+1},\dots,{K_p}}$.
  However, this means that in the forest mesh partition each leaf that is a child
from the trees in $\set{k_p+1,\dots,K_p-1}$ is partitioned to $p$.
  Since the forest mesh partitions are disjoint, no other process can hold leaf
  elements from these trees and thus they cannot be shared.

  To show the if-direction, suppose the partition $f$ fulfills (i), (ii) and (iii).
  We construct a forest mesh with a weighted SFC partition as follows.
  Each tree that is local to a single process is not refined and thus a single
  leaf element in the forest mesh. If a tree is shared  by $m$ processes then we refine it uniformly
  until we have more then $m$ elements. It is now straightforward to choose the weights of the elements
  such that the corresponding SFC partition induces $f$.
\end{proof}
\end{proposition}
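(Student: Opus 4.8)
The plan is to prove the biconditional in two directions, mirroring the structure of Definition~\ref{def:valid}.

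\medskip

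\textbf{The only-if direction.} Suppose $f$ is valid, so it arises from some forest mesh with $N$ leaves and an SFC partition. First I would write out the global SFC order of all leaves as the concatenation, tree by tree in order $0,1,\dots,K-1$, of the within-tree SFC orders; this is exactly the combined-index order $(k,I)$ introduced earlier. An SFC partition assigns to each process a contiguous block of this global order. For property~(i): since process $p$'s block is a contiguous range of combined indices, the set of trees appearing in that block is an interval $\{k_p,\dots,K_p\}$ (a gap in tree indices would force a gap in combined indices, contradicting contiguity), and of course the block may be empty. For property~(ii): contiguity of blocks in the global order means that if $p\le q$ then $p$'s block precedes $q$'s block, so every combined index of $p$ is $\le$ every combined index of $q$; comparing first components gives $K_p\le k_q$. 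For property~(iii): if some tree $k$ with $k_p<k<K_p$ were shared with another process $q$, then $q$ would hold a leaf $(k,J)$; but every leaf $(k,J)$ of an interior tree $k$ lies strictly between a leaf $(k_p,\cdot)$ and a leaf $(K_p,\cdot)$ that both belong to $p$'s block, so by contiguity $(k,J)$ is in $p$'s block too, and disjointness of the partition forbids it also lying in $q$'s block --- contradiction. Hence only $k_p$ and $K_p$ can be shared.

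\medskip

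\textbf{The if direction.} Suppose $f$ satisfies (i), (ii), (iii); I must exhibit a forest mesh and a (weighted) SFC partition inducing $f$. The construction: for each tree $k$, let $m_k$ be the number of processes $p$ with $k\in f(p)$; if $m_k=1$, leave tree $k$ unrefined (one leaf); if $m_k\ge 2$, refine tree $k$ uniformly enough times so it has at least $m_k$ leaves. Now I would assign weights. By (i) and (iii), the processes sharing a given tree $k$ form a set of the form: those $p$ with $K_p=k$ (ending on $k$) together with those with $k_p=k$ (starting on $k$), and by (ii) these form a consecutive block of process ranks; moreover each process's tree-set is an interval, and consecutive processes overlap in at most one tree. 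So the picture along the SFC is: process $0$'s trees, then a possibly-shared tree, then process $1$'s private trees, then a shared tree, and so on. I would walk through processes $p=0,1,\dots$ in order and greedily allocate to $p$: first a positive-weight slice of the leaves of its low shared tree $k_p$ (if $k_p=K_{p-1}$ is shared with $p-1$, taking the next unallocated leaves of that tree; here the uniform refinement guarantees at least one leaf remains because the tree was refined to have $\ge m_k$ leaves), then all leaves of every private tree strictly between $k_p$ and $K_p$ (weight $1$ each), then enough leaves of $K_p$ to leave room for the processes that share $K_p$ with $p$. Choosing integer weights so that each process's total weight equals a common value (e.g.\ refine further or scale weights so the per-process target is met exactly) makes the resulting weighted SFC cut reproduce $f$ precisely: process $p$ receives a leaf of tree $k$ iff $k\in f(p)$.

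\medskip

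\textbf{Main obstacle.} The only-if direction is essentially bookkeeping about contiguous ranges. The real work is the if direction, and specifically making the weight assignment rigorous: one must check that the greedy allocation never runs out of leaves in a shared tree (handled by refining a shared tree to strictly more than $m_k$ leaves, so each of its $m_k$ sharers can get at least one), that consecutive processes genuinely share at most one tree so the ``low shared tree / private interior / high shared tree'' decomposition is exhaustive and non-overlapping (this is where (iii) together with (i) and (ii) is used), and that the weights can be chosen as positive integers realizing equal (or prescribed) per-process loads --- which is where one may need to refine a little more or allow an arbitrary weight function, as the definition of valid partition permits. I would streamline this by noting that since weights are unconstrained positive reals (or integers after clearing denominators), once the \emph{combinatorial} interleaving of leaves among processes is fixed as above, choosing weights to hit any target partition sizes is immediate.
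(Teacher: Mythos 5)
Your proposal is correct and follows essentially the same route as the paper: the same contiguity argument for the only-if direction, and the same construction (leave privately-owned trees unrefined, uniformly refine shared trees until they have enough leaves, then choose weights) for the if direction. The extra detail you supply on the greedy weight assignment fills in a step the paper dismisses as straightforward, but it is not a different approach.
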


We directly conclude:

\begin{corollary}
\label{cor:valid1}
  In a valid partition each pair of processes can share at most 
  one tree, thus
  \begin{equation}
      |f(p) \cap f(q)| \leq 1
  \end{equation}
  for each $p\neq q$.
\begin{proof} 
  Suppose not then with \eqref{eq:valid3} we know that there exist
  two processes $p$ and $q$ with $p<q$ such that $f(p) \cap f(q) \agl
  \set{{k_p},{K_p}}=\set{{k_q},{K_q}}$ and $k_p \neq K_p$. Thus, $K_p > k_p = k_q$
  which contradicts property \eqref{eq:valid2}.
\end{proof}
\end{corollary}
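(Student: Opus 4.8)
The plan is to argue by contradiction, relying only on properties (ii) and (iii) of Proposition~\ref{prop:valid}; no reference to an underlying forest mesh is needed, which is exactly the point of having extracted those combinatorial properties. So suppose there exist two distinct processes, which we may take to satisfy $p < q$, with $|f(p) \cap f(q)| \geq 2$.

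First I would invoke property~\eqref{eq:valid3} in both directions: applied to the pair $(p,q)$ it gives $f(p) \cap f(q) \subseteq \set{{k_p},{K_p}}$, and by symmetry $f(p) \cap f(q) \subseteq \set{{k_q},{K_q}}$. Since the intersection has at least two elements while each of these sets has at most two, the intersection must equal both of them; in particular $\set{{k_p},{K_p}} = \set{{k_q},{K_q}}$ and this common set genuinely has two elements, so $k_p \neq K_p$ and $k_q \neq K_q$. Because $k_p$ is by definition the smallest index in $f(p)$ and $K_p$ the largest, and likewise for $q$, equality of these two-element sets forces $k_p = k_q$ and $K_p = K_q$.

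Finally I would apply property~\eqref{eq:valid2} to $p \leq q$, obtaining $K_p \leq k_q$. Chaining this with the identifications above yields $K_p \leq k_q = k_p < K_p$, a contradiction, so $|f(p) \cap f(q)| \leq 1$.

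I do not expect a genuine obstacle here; the proof is a two-line deduction. The only thing to be careful about is the bookkeeping of \emph{which} two-element set equals which --- one must note that a subset of a two-element set that itself contains two elements is the whole set, so $\set{{k_p},{K_p}}$ and $\set{{k_q},{K_q}}$ really coincide rather than merely overlap. It is also worth resisting the temptation to assume an ordering like $k_p < k_q$ without justification, since the relative ordering of the minimal tree indices across processes is precisely what property~\eqref{eq:valid2} is there to supply.
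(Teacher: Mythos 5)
Your proof is correct and follows essentially the same route as the paper's: apply property~\eqref{eq:valid3} to both processes to force $\set{k_p,K_p}=\set{k_q,K_q}$ with $k_p\neq K_p$, hence $k_p=k_q$ and $K_p=K_q$, and then derive $K_p\leq k_q=k_p<K_p$ from property~\eqref{eq:valid2}, a contradiction. Your write-up is, if anything, a little more careful than the paper's about why the two two-element sets must coincide elementwise.
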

\begin{corollary}
\label{cor:valid2}
  If in a valid partition $f$ of a coarse mesh the tree $k$ is shared
  between processes $p$ and $q$, then for each $p < r < q$
  \begin{equation}
    f(r) = \set{k} \quad\mathrm{or}\quad f(r) = \emptyset.
  \end{equation}
\end{corollary}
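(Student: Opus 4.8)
The plan is to derive the conclusion purely from properties (i) and (ii) of Proposition~\ref{prop:valid}, without even needing Corollary~\ref{cor:valid1}. We may assume $p < q$, as is implicit in the statement. Since the tree $k$ is shared between $p$ and $q$, both $f(p)$ and $f(q)$ are nonempty and contain $k$, so property (i) gives $k_p \le k \le K_p$ and $k_q \le k \le K_q$.

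Next I would fix $r$ with $p < r < q$. If $f(r) = \emptyset$ there is nothing to prove, so assume $f(r) \neq \emptyset$; then by property (i) we have $f(r) = \set{k_r, k_r+1, \dots, K_r}$. Applying property (ii) to the pair $(p,r)$ yields $K_p \le k_r$, and applying it to the pair $(r,q)$ yields $K_r \le k_q$. Both applications are legitimate because the processes involved have nonempty local tree sets. Combining these two inequalities with the ones from the previous paragraph produces the chain
\begin{equation}
  k \;\le\; K_p \;\le\; k_r \;\le\; K_r \;\le\; k_q \;\le\; k,
\end{equation}
which forces every term to equal $k$. In particular $k_r = K_r = k$, hence $f(r) = \set{k}$.

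The argument is essentially a squeeze: property (ii) bounds the trees of an intermediate process $r$ from below by $K_p$ and from above by $k_q$, while membership of $k$ in both $f(p)$ and $f(q)$ pins $K_p$ and $k_q$ against $k$ from the opposite direction. I do not anticipate any real obstacle; the only points requiring mild care are to dispatch the case $f(r) = \emptyset$ separately and to ensure that every pair of processes to which property (ii) is applied consists of two processes with nonempty local sets, which is exactly what the hypothesis that $k$ is shared, together with the assumption $f(r)\neq\emptyset$ in the remaining case, guarantees.
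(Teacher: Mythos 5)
Your proof is correct and matches the paper's intent: the paper's own proof is a one-line appeal to properties \eqref{eq:valid1}, \eqref{eq:valid2} and Corollary~\ref{cor:valid1}, and your squeeze argument is exactly the direct deduction it gestures at. The only minor difference is that your chain $k \le K_p \le k_r \le K_r \le k_q \le k$ shows Corollary~\ref{cor:valid1} is not actually needed --- properties (i) and (ii) alone suffice.
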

\begin{proof}
We can directly deduce this from \eqref{eq:valid1}, \eqref{eq:valid2} and
Corollary \ref{cor:valid1}.
\end{proof}
In order to properly deal with empty processes in our calculations, we
define start and end tree indices for these as well.
\begin{definition}
  \label{def:seforempty}
  Let $p$ be an empty process in a valid partition $f$, thus $f(p)=\emptyset$.
  Let furthermore $q<p$ be maximal such that $f(q)\neq\emptyset$.
  Then we define the start and end indices of $p$ as
  \begin{align}
    k_p&:=K_q+1,\\
    K_p&:=K_q = k_p-1.
  \end{align}
  If no such $q$ exists, then no rank lower than $p$ has local trees and 
  we set $k_p=0$, $K_p=-1$.
  With these definitions, equations \eqref{eq:valid1} and
  \eqref{eq:valid2} are valid if any of the processes are empty.
\end{definition}

From now on all partitions in this manuscript are considered as valid even
if not stated explicitly.

\subsection{Encoding a valid partition}

A typical way to define a partition in a tree based code is to store an array
\texttt{O} of tree offsets for each process, that is, for process $p$ the global
index of the first local tree is stored.  
The range of local trees for process $p$ can then be computed as
$\set{{\texttt O[p]},\dots,{\texttt O[p+1]-1}}$.
However, for valid partitions in the coarse mesh setting, this information would not
be sufficient because we would not know which trees are shared.
We thus slightly modify the offset array by adding a negative sign when the first tree 
of a process is shared.
\begin{definition}
  \label{def:offsetarray}
  Let $f$ be a valid partition of a coarse mesh with $k_p$ being the 
  index of $p$'s first local tree.
  Then we store this partition in an array \texttt O of length $P+1$, where
  for $0 \leq p < P$
  \begin{equation}
    \label{eq:offsetarray}
 \texttt O[p] = \left\lbrace \begin{array}{ll}
      k_p, &
      \begin{array}{l}
        \textrm{if ${k_p}$ is not shared with the next smaller}\\
        \textrm{nonempty process or }f(p)=\emptyset,
      \end{array}
     \\[2ex]
               -k_p - 1, & \,\,\,\textrm{if it is.}
                   \end{array}\right.
  \end{equation}
  Furthermore, in $\texttt O[P]$ we store the total number of trees.
\end{definition}
Because of the definition of $k_p$ we know that $\texttt O[0] = 0$ for all
valid partitions.

\begin{lemma}
  \label{lem:rangefromoffset}
 Let $f$ be a valid partition and \texttt O as in Definition \ref{def:offsetarray}.
 Then
  \begin{equation}
    \label{eq:getfirstfromoffset}
    k_p =  \begin{cases}
      \texttt O[p], &\textrm{if }\texttt O[p] \geq 0,\\[1ex]
   |\texttt O[p] + 1|, & \textrm{if }\texttt O[p] < 0,
     \end{cases}
  \end{equation}
  and
  \begin{equation}
    \label{eq:getlastfromoffset}
    K_p = |\texttt O[p+1]| - 1.
  \end{equation}
\end{lemma}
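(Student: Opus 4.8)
The plan is to read off both formulas directly from Definition~\ref{def:offsetarray}, splitting into the cases that appear there. First I would establish \eqref{eq:getfirstfromoffset}. If $\texttt{O}[p]\geq 0$, then by the first branch of \eqref{eq:offsetarray} we are in the case where $k_p$ is not shared with the next smaller nonempty process (or $f(p)=\emptyset$), and in that case $\texttt{O}[p]=k_p$ exactly, so the claimed identity is immediate. If $\texttt{O}[p]<0$, then we must be in the second branch, so $\texttt{O}[p]=-k_p-1$; since $k_p\geq 0$ always (it is a global tree index, or is defined as $K_q+1\geq 0$ for empty processes by Definition~\ref{def:seforempty}), we have $-k_p-1\leq -1<0$, consistent with the hypothesis $\texttt{O}[p]<0$, and rearranging gives $k_p=-\texttt{O}[p]-1=|\texttt{O}[p]+1|$, using that $\texttt{O}[p]+1=-k_p\leq 0$ so the absolute value just flips the sign. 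One small point to check here is that the two cases are genuinely exhaustive and mutually exclusive, i.e.\ that $\texttt{O}[p]$ cannot be both nonnegative and arise from the ``shared'' branch; this holds because the ``shared'' branch always produces a value $\leq -1$, as just noted.

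Next I would prove \eqref{eq:getlastfromoffset}. The idea is to relate $K_p$ to the first tree of the next \emph{nonempty} process, and then observe that the offset array stores that tree's index (up to sign) even at intervening empty entries. Concretely, let $q$ be the smallest index with $q>p$ and $f(q)\neq\emptyset$; if no such $q$ exists, take $q=P$. For all $p<r<q$ the process $r$ is empty, and by Definition~\ref{def:seforempty} its recorded start index is $k_r=K_{r-1}+1$, which unwinds to $k_r=K_p+1$ for every such $r$; moreover $\texttt{O}[r]=k_r$ for empty processes by the first branch of \eqref{eq:offsetarray}. For the process $q$ itself (when $q<P$): by property~\eqref{eq:valid2} and \eqref{eq:valid1}, the trees are laid out consecutively with no gaps, so $k_q=K_p+1$; whether or not $k_q$ is shared with $p$, we have $|\texttt{O}[q]|=k_q$ in the shared case ($\texttt{O}[q]=-k_q-1$, so $|\texttt{O}[q]|=k_q+1$ --- wait, this needs care) --- I will instead use $|\texttt{O}[q]+1|$ type reasoning, or more cleanly: in either branch $|\texttt{O}[q]|-1$... let me restate. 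The clean statement is: for \emph{every} index $m$ with $p<m\leq P$, we have $|\texttt{O}[m]|-1=K_p$. For $m=P$ this is $\texttt{O}[P]-1=K-1=K_p$ exactly when $p$ is the last nonempty process. For empty $m<q$, $\texttt{O}[m]=k_m=K_p+1$, so $|\texttt{O}[m]|-1=K_p$. For $m=q$ nonempty: $k_q=K_p+1$, and $\texttt{O}[q]$ is either $k_q$ or $-k_q-1$; in the first case $|\texttt{O}[q]|-1=k_q-1=K_p$, and in the second $|\texttt{O}[q]|-1=(k_q+1)-1=k_q\neq K_p$. So the naive absolute-value argument fails in the shared branch, which means the correct reduction is: $K_p=k_q-1$ where the $-1$ and the ``$+1$'' in the shared encoding $-k_q-1$ must be handled jointly. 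The precise computation is $|\texttt{O}[q]+1|$... no. I will simply verify: if $\texttt{O}[p+1]=k_{p+1}\geq 0$ then $K_p=k_{p+1}-1=\texttt{O}[p+1]-1=|\texttt{O}[p+1]|-1$; if $\texttt{O}[p+1]=-k_{p+1}-1<0$ then $|\texttt{O}[p+1]|-1=k_{p+1}+1-1=k_{p+1}$, which is wrong, so in fact $K_p=k_{p+1}-1=|\texttt{O}[p+1]|-2$ here --- this contradiction tells me I have the relationship between $K_p$ and $k_{p+1}$ wrong in the shared case: if tree $k_{p+1}$ is shared with $p$, then $k_{p+1}=K_p$, not $K_p+1$. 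That resolves it: in the shared branch $K_p=k_{p+1}=|\texttt{O}[p+1]+1|$, and since $\texttt{O}[p+1]=-k_{p+1}-1$ we get $|\texttt{O}[p+1]|-1=(k_{p+1}+1)-1=k_{p+1}=K_p$. So $|\texttt{O}[p+1]|-1=K_p$ in both cases, using $k_{p+1}=K_p+1$ when not shared and $k_{p+1}=K_p$ when shared. For the empty-process case $p$ itself, $K_p=k_p-1$ and one checks $\texttt{O}[p+1]$ still encodes the correct next start, giving the same formula.

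\textbf{Main obstacle.} The delicate point, as the scratch work above shows, is the bookkeeping in \eqref{eq:getlastfromoffset}: one must correctly track whether $k_{p+1}$ equals $K_p$ (shared case) or $K_p+1$ (non-shared case), and simultaneously account for the $-k-1$ offset encoding, so that the two ``off-by-one'' effects cancel to give the uniform formula $K_p=|\texttt{O}[p+1]|-1$. Handling empty processes ($f(p)=\emptyset$ or $f(p+1)=\emptyset$) via Definition~\ref{def:seforempty} requires one more short induction to propagate $K_p$ through a run of empty ranks, but this is routine once the non-empty case is pinned down. The first formula \eqref{eq:getfirstfromoffset} is, by contrast, an immediate unwinding of the definition.
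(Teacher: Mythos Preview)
Your proposal is correct and, once the scratch work is stripped away, takes exactly the same route as the paper: invert \eqref{eq:offsetarray} for the first formula, and for the second split on whether $k_{p+1}$ is shared with $p$ (so $K_p=k_{p+1}$) or not (so $K_p=k_{p+1}-1$), observing that the $-k-1$ encoding supplies precisely the extra unit needed in the shared case. Your initial detour via the next \emph{nonempty} process $q>p$ is unnecessary---the paper works directly with $p+1$ and handles $f(p)=\emptyset$ (and implicitly $f(p+1)=\emptyset$) by invoking Definition~\ref{def:seforempty}, which already normalizes $k_{p+1}$ and $K_p$ so that the same two-case computation goes through verbatim; you should do the same and drop the talk of an induction through runs of empty ranks.
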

  \begin{proof}
    The first statement follows since equations \eqref{eq:offsetarray} and \eqref{eq:getfirstfromoffset}
    are inverses of each other.

    For equation \eqref{eq:getlastfromoffset} we distinguish two cases:
    First, let $f(p)$ be nonempty.
    If the last tree of $p$ is not shared with $p+1$, then
    it is ${k_{p+1}} - 1$ and $\texttt O[p+1] = k_{p+1}$, thus we have 
    \begin{equation}
      K_p = k_{p+1}-1 = |\texttt O[p+1]| - 1.
    \end{equation}
    If the last tree of $p$ is shared with $p+1$, then it is
    $k_{p+1}$,
    the first local tree of $p+1$ and thus $\texttt O[p+1] = -k_{p+1}-1$ and 
    \begin{equation}
      K_p= k_{p+1} = |-k_{p+1}| = |\texttt O[p+1]| - 1.  
    \end{equation}
    Let now $f(p)=\emptyset$.
    If $k_{p+1}$ is not shared, then $k_{p+1}=k_p=K_p+1$ by Definition \ref{def:seforempty}
    and $\texttt O[p+1]=k_{p+1}$ by \eqref{eq:offsetarray}.
    Thus,
    \begin{equation}
      K_p = k_p - 1 = k_{p+1}-1 = |\texttt O[p+1]| - 1.
    \end{equation}
    If $k_{p+1}$ is shared, then again by Definition \ref{def:seforempty}
    $k_{p+1}=k_p-1=K_p$ and $\texttt O[p+1] = -k_{p+1} - 1$, such that we 
    obtain
    \begin{equation}
      K_p = k_{p+1} = k_{p+1} + 1 - 1 = |\texttt O[p+1]| - 1.
    \end{equation}
  \end{proof}
\begin{corollary}
\label{col:numloctrees}
  In the setting of Lemma \ref{lem:rangefromoffset} the number $n_p$ of local
trees of process $p$ fulfills
  \begin{equation}
    n_p = |\texttt O[p+1]| - k_p = \begin{cases}
                             |\texttt O[p+1]|-\texttt O[p],& \textrm{if }\texttt O[p]\geq 0\\
                             |\texttt O[p+1]|-|\texttt O[p]+1|,& \textrm{else.}
                           \end{cases}
  \end{equation}
\end{corollary}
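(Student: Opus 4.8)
The plan is to derive the formula directly from Lemma~\ref{lem:rangefromoffset}, after first establishing the uniform identity $n_p = K_p - k_p + 1$ that holds for every process, whether or not it is empty. For a nonempty process $p$, property~\eqref{eq:valid1} tells us that the local trees are exactly $\set{k_p, k_p+1, \dots, K_p}$, so their count is $K_p - k_p + 1$. For an empty process, $n_p = 0$ by definition, but Definition~\ref{def:seforempty} fixes $K_p = k_p - 1$, so $K_p - k_p + 1 = 0$ as well; hence the same expression is valid in all cases and no case distinction is needed at this stage.

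Next I would substitute the closed forms supplied by Lemma~\ref{lem:rangefromoffset}. Equation~\eqref{eq:getlastfromoffset} gives $K_p = |\texttt O[p+1]| - 1$, so
\begin{equation}
  n_p = K_p - k_p + 1 = |\texttt O[p+1]| - 1 - k_p + 1 = |\texttt O[p+1]| - k_p,
\end{equation}
which is the first claimed equality. To obtain the explicit case distinction, I would then expand $k_p$ by equation~\eqref{eq:getfirstfromoffset}: if $\texttt O[p] \geq 0$ then $k_p = \texttt O[p]$, so $n_p = |\texttt O[p+1]| - \texttt O[p]$; if $\texttt O[p] < 0$ then $k_p = |\texttt O[p] + 1|$, so $n_p = |\texttt O[p+1]| - |\texttt O[p] + 1|$. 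This matches the two branches in the statement and finishes the proof.

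The argument is essentially a two-line substitution, so there is no genuine difficulty; the only point that deserves a sentence of care—and thus the closest thing to an obstacle—is justifying that $n_p = K_p - k_p + 1$ may be used for empty processes, which is precisely what the convention in Definition~\ref{def:seforempty} was set up to guarantee.
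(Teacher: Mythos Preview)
Your proof is correct and follows exactly the paper's approach: the paper's proof consists of the single sentence ``This follows from the formula $n_p = K_p - k_p + 1$,'' and you have simply spelled out the substitution of \eqref{eq:getlastfromoffset} and \eqref{eq:getfirstfromoffset} into that formula. Your explicit check that the identity $n_p = K_p - k_p + 1$ also covers empty processes via Definition~\ref{def:seforempty} is a welcome bit of care that the paper leaves implicit.
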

  \begin{proof}
    This follows from the formula $n_p = K_p - k_p + 1$.
  \end{proof}
Lemma \ref{lem:rangefromoffset} and Corollary \ref{col:numloctrees} show that
for valid partitions the array \texttt O carries the same information as the
partition $f$.

\subsection{The ghost trees}

A valid partition gives the information about the local trees of a process.
These trees are all trees from whom a forest has local elements.
In many applications it is necessary to create a layer of ghost (or halo) elements
of the forest to properly exchange data with the neighboring processes.
Since these ghost elements may be descendants of non local trees, we also want
to store those trees as ghost trees.
To be independent of a forest mesh we store each possible ghost tree and do not
consider whether there are actually forest mesh ghost elements in this tree.
This, however, does only affect the first and the last local tree, where we possibly store
more ghosts than needed by a forest.
Since we restrict the neighbor information to face-neighbors, we also restrict
ourselves to face-neighbor ghosts in this paper.
\begin{definition}
  Let $f$ be a valid partition of a coarse mesh.
  A \emph{ghost tree} of a process $p$ is any tree $k$ such that
  \begin{itemize}
    \item $k\notin f(p)$, and
    \item There exists a face-neighbor $k'$ of $k$, such that $k'\in f(p)$.
  \end{itemize}
\end{definition}

If a coarse mesh is partitioned according to $f$ then each process $p$ will
store its local trees and its ghost trees.

\subsection{Computing the communication pattern}
Suppose we are in a setting where a coarse mesh is partitioned among the
processes $p\in \indexset P$ according to a partition $f$.
The input of the partition algorithm is this coarse mesh and a second partition
$f'$, and the output is a coarse mesh that is partitioned according to the
second partition.

We suppose that besides its local trees and ghosts each process knows the
complete partition tables $f$ and $f'$, for example in the form of offset arrays.
The task is now for each process to identify the processes that it needs to send 
local trees and ghosts to and carry out this communication.
A process also needs to identify the processes it receives trees and ghosts
from and do the receiving.
We discuss here how each process can compute this information from the
offset arrays without further communication.

\subsubsection{Ownership during partition}
The fact that trees can be shared between multiple processes poses a challenge 
when repartitioning a coarse mesh.
Suppose we have a process $p$ and a tree $k$ with $k\in f'(p)$,
and $k$ is a local tree for more than one process in the partition $f$.
We do not want to send the tree multiple times, so how do we decide which
process sends $k$ to $p$?

A simple solution would be that the process with the smallest index that $k$ is
a local tree to sends $k$.
This process is unique and it can be determined without communication.
However, suppose that the two processes $p$ and $p-1$ share the tree $k$
in the old partition and $p$ will also have this tree in the new partition.
Then $p-1$ would send the tree to $p$ even though we could handle this
situation without any communication.

We solve this by only sending a local tree to a process $p$ if
this tree was not already local on $p$:
\begin{pardgm}
  \label{par:sendtrees}
 In a repartitioning setting with $k\in f'(p)$, the process that sends $k$ to
$p$ is \begin{itemize}
   \item $p$, if $k$ already is a local tree of $p$, or else
   \item $q$, with $q$ minimal such that $k\in f(q)$.
 \end{itemize}
\end{pardgm}
We acknowledge that sending from $p$ to $p$ in the first case is just a local
data movement involving no communication.

\begin{definition}
In a repartitioning setting, given a process
$p$ we define the sets $S_p$ and $R_p$ of processes
that $p$ sends local trees to, resp.\ receives from, thus
\begin{subequations}
\begin{align}
 S_p&:=\set{0\leq p < P \abst{|} p \textrm{ sends local trees to p'}},\\[1ex]
 R_p&:=\set{0\leq p < P \abst{|} p \textrm{ receives local trees from p'}}.
\end{align}
\end{subequations}
These sets can both include the process $p$ itself.
Furthermore, we establish notations for the smallest and biggest ranks in 
these sets:
\begin{subequations}
\begin{align}
 s_\mathrm{first}:= \min S_p,&\quad
s_\mathrm{last}:= \max S_p,\\
 r_\mathrm{first}:= \min R_p,&\quad
r_\mathrm{last}:= \max R_p.
\end{align}
\end{subequations}
\end{definition}

$S_p$ and $R_p$ are uniquely determined by Paradigm \ref{par:sendtrees}. 

\subsubsection{An example}
We discuss a small example, see Figure \ref{fig:partitionex}.  Here, we
repartition a partitioned coarse mesh of five trees among three processes.
We color the local trees of process $0$ in blue, the local trees of process 
$1$ in green and the local trees of process $2$ in red.
The initial partition $f$ is given by
\begin{equation}
\label{eq:example_start}
\texttt{O}=\set{0, -2, 3, 5},
\end{equation}
and the new partition $f'$ by
\begin{equation}
\texttt{O'} = \set{0, -3, -4, 5}.
\end{equation}
Thus, initially tree $1$ is shared by processes $0$ and $1$ and in the new
partition tree $2$ is shared by processes $0$ and $1$ and tree $3$ by processes
$2$ and $3$.
We give the local trees that each process will send to each other process
in a table, where the set in row $i$ column $j$ is the set of local trees that
process $i$ send to process $j$.
\begin{equation}
\begin{array}{c|ccc}
  & 0 & 1 & 2 \\ \hline
0 & \set{0,1} & \emptyset & \emptyset  \\
1 & \set{2} & \set{2} & \emptyset \\
2 & \emptyset & \set{3} & \set{3,4}
\end{array}
\end{equation}
This leads to the sets $S_p$ and $R_p$:

\begin{subequations}
\label{eq:example_end}
\begin{align}
S_0 &= \set{0},   & R_0&=\set{0,1},\\
S_1 &= \set{0,1}, & R_1&=\set{1,2},\\
S_2 &= \set{1,2}, & R_2&=\set{2}.
\end{align}
\end{subequations}

For example process $1$ keeps the tree $2$ that is also needed by process
$0$. Thus, process $1$ sends tree $2$ to process $0$. Process $0$ also needs
tree $1$, which is local on process $1$ in the old partition. But, since it is
also local to process $0$, process $1$ does not send it.

{
\begin{figure}
\center
\includegraphics[width=0.6\textwidth]{./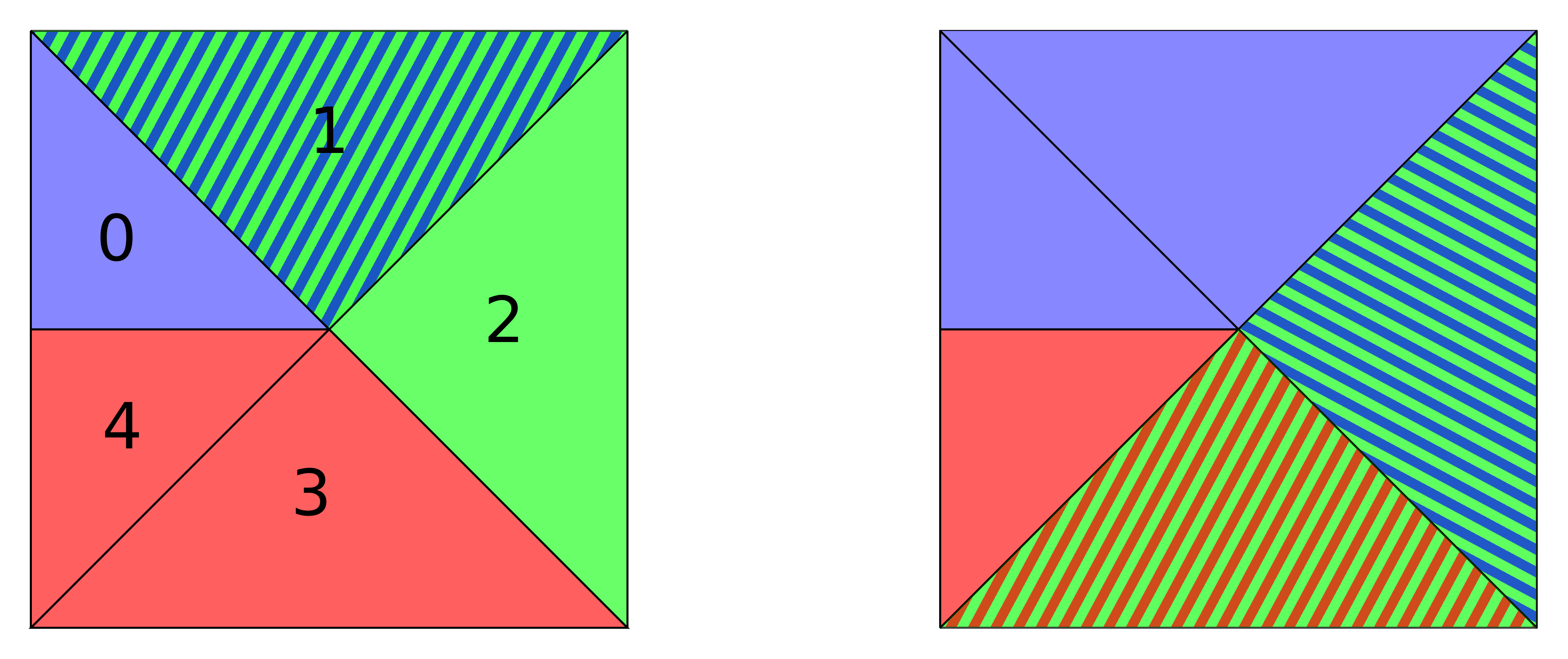}
\caption{A small example for coarse mesh partitioning. The coarse mesh consists
of 5 trees and is partitioned among three processes. Left: The initial partition
\texttt{O}. Right: The new partition \texttt{O'}.
The colors of the trees encode the processes that have the respective tree as
local tree. Process $0$ in blue, process $1$ in green, and process $2$ in red.
Thus, initially tree 1 is shared among processes $0$ and $1$ and in the new
partition tree 2 is shared among processes $0$ and $1$, and tree 3 among 
processes $1$ and $2$.
In equations \eqref{eq:example_start}-\eqref{eq:example_end} we depict the
sets \texttt{O} and \texttt{O'}, the trees that each process sends,
and the sets $S_p$ and $R_p$.}
\label{fig:partitionex}
\end{figure}
}

\subsubsection{Determining $S_p$ and $R_p$}
\label{sec:detSpRp}

In this section we show that each process can compute the sets $S_p$ and 
$R_p$ from the offset array without further communication.
It will become clear in Section~\ref{sec:partghosknow} that
ghost trees need not yet be discussed at this point.

\begin{proposition}
\label{prop:commpattern}
A process $p$ can calculate the sets $S_p$ and $R_p$ without further
communication from the offset arrays of the new and old partition.
Once the first and last element of each set are known, $p$ can determine
in constant time whether any given rank is in any of those sets.
\end{proposition}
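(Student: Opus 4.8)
The plan is to characterize both $S_p$ and $R_p$ as explicit intervals of ranks, computable by local formulas from the two offset arrays, and then to argue that once the endpoints $s_{\mathrm{first}}, s_{\mathrm{last}}, r_{\mathrm{first}}, r_{\mathrm{last}}$ are in hand, membership of an arbitrary rank reduces to a constant‑time interval check plus a constant‑time test for the one or two degenerate cases (shared boundary trees). First I would handle $R_p$, the receive set, since it is the more direct of the two. By Paradigm~\ref{par:sendtrees}, $p$ receives tree $k$ from some $q\ne p$ precisely when $k\in f'(p)$, $k\notin f(p)$, and $q$ is the minimal rank with $k\in f(q)$. The trees $f'(p)=\{k'_p,\dots,K'_p\}$ form a consecutive block by Proposition~\ref{prop:valid}(i), and by (ii) the old owners of these trees form a consecutive run of ranks. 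So I would locate, using Lemma~\ref{lem:rangefromoffset} applied to $\texttt O$, the minimal old owner of $k'_p$ and the minimal old owner of $K'_p$; these, after excising $p$ itself and discarding any tree that is already local on $p$, give $r_{\mathrm{first}}$ and $r_{\mathrm{last}}$. The key structural facts making this legitimate are exactly Corollaries~\ref{cor:valid1} and~\ref{cor:valid2}: the only trees of $f'(p)$ that $p$ can already own in the old partition, hence not receive, are the shared boundary trees $k'_p$ and $K'_p$, so at most the two endpoints of the candidate rank interval need individual adjustment, and everything strictly interior is received automatically.

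Next I would treat $S_p$ symmetrically but with more care, because the ``send from $p$ if already local'' clause of Paradigm~\ref{par:sendtrees} cuts the other way. Process $p$ sends tree $k$ to $p'$ iff $k\in f'(p')$ and either $k\in f(p)$ (the ``keep'' case) or $p$ is the minimal old owner of $k$ and $k\notin f(p')$. I would split on which of $p$'s \emph{old} trees $f(p)=\{k_p,\dots,K_p\}$ are being sent: interior old trees $k_p<k<K_p$ are, by Corollary~\ref{cor:valid2} applied in the old partition, owned by $p$ alone, so $p$ is trivially their minimal owner and sends each to whichever new process owns it; by Proposition~\ref{prop:valid}(ii) for $f'$ the recipients form a consecutive rank run whose endpoints I read off from $\texttt O'$ via Lemma~\ref{lem:rangefromoffset}. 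The boundary old trees $k_p,K_p$ require checking whether $p$ is the minimal old owner (a sign‑bit test on $\texttt O$) and whether the prospective new owner already had the tree (a comparison of the relevant entries of $\texttt O$ and $\texttt O'$). Taking the union of the recipient runs over the at most three contributing old trees (left boundary, interior block, right boundary) and including $p$ itself when $p\in S_p$, I obtain $s_{\mathrm{first}}$ and $s_{\mathrm{last}}$.

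Having produced closed formulas for all four endpoints from $\texttt O$ and $\texttt O'$ alone, no communication is used, which is the first claim. For the second claim I would observe that the above analysis shows $R_p$ equals the integer interval $[r_{\mathrm{first}},r_{\mathrm{last}}]$ \emph{except} possibly for the endpoints, and likewise $S_p$ is $[s_{\mathrm{first}},s_{\mathrm{last}}]$ except possibly for its endpoints: the interior of each interval is fully contained in the set by the disjointness argument used in the proof of Proposition~\ref{prop:valid}(iii). Hence, given a query rank $r$, one tests $r\in(r_{\mathrm{first}},r_{\mathrm{last}})$ in constant time, and if $r$ equals an endpoint one re‑runs the single shared‑tree test for that endpoint, again constant time; similarly for $S_p$. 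The main obstacle I anticipate is not any deep fact but the careful bookkeeping around the shared boundary trees and empty processes: one must verify, using Definition~\ref{def:seforempty} and the sign convention of Definition~\ref{def:offsetarray}, that the endpoint formulas remain correct when $f(p)$ or $f'(p)$ is empty, when a boundary tree is shared, and when $p$ itself lies in $S_p$ or $R_p$, so that no off‑by‑one error creeps into the interval endpoints. Once those cases are tabulated the proposition follows.
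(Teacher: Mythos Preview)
Your outline for locating the four endpoints $s_{\mathrm{first}},s_{\mathrm{last}},r_{\mathrm{first}},r_{\mathrm{last}}$ from the two offset arrays is essentially the paper's argument, and that part is fine. The gap is in the second claim. You assert that $S_p$ and $R_p$ coincide with the integer intervals $[s_{\mathrm{first}},s_{\mathrm{last}}]$ and $[r_{\mathrm{first}},r_{\mathrm{last}}]$ except possibly at the endpoints, so that membership reduces to an interval test plus an endpoint correction. This is false: both sets can have holes strictly in their interior, and the exceptions are not confined to the two boundary ranks.

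The most obvious source of interior holes is an empty process: if $s_{\mathrm{first}}<q<s_{\mathrm{last}}$ and $f'(q)=\emptyset$ then $q\notin S_p$; dually, an empty process in the old partition punches a hole in $R_p$. But holes also occur with all processes non\-empty. Take $P=3$, old partition $f(0)=\{0,1\}$, $f(1)=\{1\}$, $f(2)=\{2\}$, new partition $f'(0)=\{0\}$, $f'(1)=\{1\}$, $f'(2)=\{1,2\}$. Process $0$ keeps tree $0$ and is the minimal old owner of tree $1$; process $2$ needs tree $1$ and did not have it, so $0$ sends to $2$; but process $1$ already owned tree $1$, so by Paradigm~\ref{par:sendtrees} it serves itself and $0$ sends nothing to $1$. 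Hence $S_0=\{0,2\}$, with $s_{\mathrm{first}}=0$, $s_{\mathrm{last}}=2$, and a hole at the interior rank $1$. Your appeal to the disjointness argument of Proposition~\ref{prop:valid}(iii) does not rule this out: that argument controls which \emph{trees} are shared, not which \emph{ranks} in the interval actually receive from~$p$.

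The paper closes this gap differently. Instead of arguing that the interval is almost full, it proves a direct constant-time membership test (Lemma~\ref{lem:Spconstant}): from the two offset arrays one computes four adjusted tree indices $\hat k_{\tilde p},\hat K_{\tilde p},\hat k_q,\hat K_q$ and checks four inequalities to decide whether $q\in S_{\tilde p}$ for arbitrary $\tilde p,q$. Membership in $R_p$ is then the dual test $p\in S_q$. This is what replaces your interval-plus-endpoint check; without it the proposition is not established.
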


We split the proof in two parts. First we discuss how a process can compute
$s_\mathrm{first},\, s_\mathrm{last},\, r_\mathrm{first},$ and $r_\mathrm{last}$,
and then how it can decide for two processes $\tilde p$ and $q$ whether $q\in S_{\tilde p}$.

We derive $S_p$ from $s_\mathrm{first}$ and $s_\mathrm{last}$.
To determine $s_\mathrm{first}$ we consider two cases.
First, if the first local tree of $p$ is not shared with a smaller rank,
then $s_\mathrm{first}$ is the smallest process that has this tree in the new
partition and did not have it in the old one.
We can find this process with a binary search in the offset array.

Second, if the first tree of $p$ is shared with a smaller rank, then $p$ only
sends it in the case that $p$ keeps this tree in the new partition.
Then $s_\mathrm{first} = p$.
Otherwise, we consider the second tree of $p$ and proceed with a binary search
as in the first case.

To compute $s_\mathrm{last}$ we notice that among all ranks having
$p$'s old last tree in the new partition that did not already have it before,
$s_\mathrm{last}$ is the biggest
(except when $p$ itself is this biggest rank, in which case it certainly had
the last tree before).
We can determine this rank with a binary search as well.
If no such process exists, we proceed with the second last tree of $p$.

\begin{remark}
The special case when $S_p=\emptyset$ occurs under
three circumstances:
\begin{enumerate}
\item $p$ does not have any local trees.
\item $p$ only has one local tree, it is shared with a smaller rank,
      and $p$ does not have this tree as a local tree in the new partition.
\item $p$ only has two trees, for the first one case 2 holds and
      the second ($=$last) one is shared with a set $Q$ of bigger ranks.
      There is no process $q\notin Q$ that has this tree as a local tree
      in the new partition.
\end{enumerate}
These conditions can be queried before computing $s_\mathrm{first}$ and
$s_\mathrm{last}$.
\end{remark}

Similarly, to compute $R_p$, we first look at the smallest and biggest elements
of this set.
These are the first and last process that $p$ receives trees
from.
$r_\mathrm{first}$ is the smallest rank that had $p$'s new first tree as a
local tree in the old partition, or it is $p$ itself if this tree was also a
local tree on $p$.
And $r_\mathrm{last}$ is the smallest rank bigger then $r_\mathrm{first}$
that had $p$'s new last local tree as a local tree in the old partition, or $p$ itself.
We can find both of these with a binary search in the offset array of
the old partition.

\begin{remark}
$R_p$ is empty if and only if $p$ does not have any local trees in the new
partition.
\end{remark}

\begin{lemma}
\label{lem:Spconstant}
Given any two processes $\tilde p$ and $q$ the process $p$ can determine in
constant time whether $q\in S_{\tilde p}$.
In particular, this includes the cases $\tilde p = p$ and $q = p$.
\begin{proof}
 Let $\hat k_{\tilde p}$ be the first non-shared local tree of $\tilde p$ in the
 old partition. Let $\hat K_{\tilde p}$ be the last local tree of $\tilde p$ in the old
 partition if it is not the first local tree of $q$ in the old partition.
 Let it be the second last local tree otherwise.
 Furthermore, let $\hat k_q$ and $\hat K_q$ be the first, resp.\ last, local trees of
 $q$ in the new partition. We add $1$ to $\hat k_q$ if $q$ sends its first local
 tree to itself and this tree is also the new first local tree of $q$.
 We claim that $q\in S_{\tilde p}$ if and only if all of the four inequalities
 \begin{align}
  \hat k_{\tilde p} \leq \hat K_{\tilde p},\quad
  \hat k_{\tilde p} \leq \hat K_q,\quad
  \hat k_q \leq \hat K_{\tilde p},\quad\textrm{and}\quad
  \hat k_q\leq \hat K_q\phantom{,}
 \end{align}
 hold.
 The only-if direction follows, since if $\hat k_{\tilde p}>\hat K_{\tilde p}$, then
$\tilde p$ does not have trees to send to $q$. If $\hat k_{\tilde p} > \hat K_q$, 
 then the last new tree on $q$ is smaller than the first old tree on $\tilde p$.
 If $\hat k_q > \hat K_{\tilde p}$ then the last tree that $\tilde p$ could send
 is smaller than the first new local tree of $p$. And if $\hat k_q > \hat K_q$ then 
 $q$ does not receive any trees from other processes.
 Thus $\tilde p$ cannot send trees to $q$ if any of the four conditions is not
 fulfilled.
 The if-direction follows, since if all four conditions are fulfilled there
 exist at least one tree $k$ with 
 \begin{align}
  \hat k_{\tilde p}\leq k \leq \hat K_{\tilde p} \quad\mathrm{and}\quad
  \hat k_q\leq k \leq K_q.
 \end{align}
 Any tree with this properties is sent from $\tilde p$ to $q$.
 Process $p$ can compute the four values $\hat k_{\tilde p}, \hat K_{\tilde p}, \hat k_q$ and 
 $\hat K_q$ 
 from the partition offsets in constant time.
\end{proof}
\end{lemma}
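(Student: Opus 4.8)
The plan is to reduce the question ``$q\in S_{\tilde p}$'' to whether two integer intervals of tree indices overlap, and to show that the four endpoints of these intervals can be read from the two offset arrays $\texttt O$ and $\texttt O'$ in constant time; the four inequalities in the statement are then nothing but the intersection test $\max(\hat k_{\tilde p},\hat k_q)\le\min(\hat K_{\tilde p},\hat K_q)$ for two nonempty integer intervals $[\hat k_{\tilde p},\hat K_{\tilde p}]$ and $[\hat k_q,\hat K_q]$.

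First I would unwind Paradigm~\ref{par:sendtrees}. For $\tilde p\neq q$ the rank $\tilde p$ sends something to $q$ exactly when there is a tree $k$ with $k\in f'(q)$, $k\notin f(q)$ (so that $q$ does not merely keep it), and $\tilde p=\min\{\,r:k\in f(r)\,\}$; the case $\tilde p=q$ is the mildly different ``keep'' condition $f(q)\cap f'(q)\neq\emptyset$, which is a small separate computation. So the object to understand is $A_{\tilde p}:=\{\,k:\tilde p=\min\{\,r:k\in f(r)\,\}\,\}$, the set of trees whose lowest-rank old owner is $\tilde p$. By \eqref{eq:valid1} the old local trees of $\tilde p$ are the consecutive block $k_{\tilde p},\dots,K_{\tilde p}$, and by \eqref{eq:valid2} no strictly smaller rank owns any of these trees, with the possible exception of $k_{\tilde p}$ itself. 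Hence $A_{\tilde p}$ is the interval from $\hat k_{\tilde p}$ to $K_{\tilde p}$, where $\hat k_{\tilde p}$ is the first old local tree of $\tilde p$ not shared with a smaller rank; by Lemma~\ref{lem:rangefromoffset} and the sign convention of Definition~\ref{def:offsetarray}, $\hat k_{\tilde p}$ equals $k_{\tilde p}$ if $\texttt O[\tilde p]\ge 0$ and $k_{\tilde p}+1$ if $\texttt O[\tilde p]<0$, computed in $O(1)$; this interval is empty exactly when $\tilde p$ has at most one old tree and that tree is shared downward (with the case $f(\tilde p)=\emptyset$ covered by Definition~\ref{def:seforempty}).

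Next I would control how many trees of $A_{\tilde p}$ can violate $k\notin f(q)$ when $\tilde p\neq q$. Since $A_{\tilde p}\subseteq f(\tilde p)$, Corollary~\ref{cor:valid1} gives $|A_{\tilde p}\cap f(q)|\le 1$, and \eqref{eq:valid2} identifies the offending tree: if $\tilde p<q$ it can only be $K_{\tilde p}$, and then $K_{\tilde p}$ is the first old local tree of $q$; if $\tilde p>q$ the only candidate is $k_{\tilde p}$, which has already been excluded from $A_{\tilde p}$. Hence the trees $\tilde p$ may actually send to $q$ form the interval $[\hat k_{\tilde p},\hat K_{\tilde p}]$ with $\hat K_{\tilde p}=K_{\tilde p}$, corrected to the second-to-last old tree of $\tilde p$ when $K_{\tilde p}$ equals $q$'s first old local tree. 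Symmetrically, the trees $q$ must receive from \emph{some} rank are $f'(q)=\{k'_q,\dots,K'_q\}$ minus, if applicable, its first new tree when $q$ keeps that tree; this is the interval $[\hat k_q,\hat K_q]$ with $\hat K_q=K'_q$ and $\hat k_q\in\{k'_q,k'_q+1\}$. With the endpoints fixed, one checks both implications: if $\tilde p$ truly sends a tree $k$ to $q$ then $k\notin f(q)$ forces $k$ to differ from the single tree trimmed on each side, so $k\in[\hat k_{\tilde p},\hat K_{\tilde p}]\cap[\hat k_q,\hat K_q]$; conversely any $k$ in that intersection lies in $A_{\tilde p}\cap f'(q)$, and the trimmings are arranged precisely so that such a $k$ is not in $f(q)$, so by Paradigm~\ref{par:sendtrees} it is sent from $\tilde p$ to $q$. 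All four endpoints are obtained in $O(1)$ by reading and comparing a handful of entries of $\texttt O$ and $\texttt O'$ through Lemma~\ref{lem:rangefromoffset}, which yields the constant-time claim; nothing in the argument is special about $\tilde p$ or $q$ coinciding with the computing rank $p$.

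I expect the real work to lie not in the interval-intersection core but in fixing the two trimmings correctly across all degenerate configurations: one must ensure the single possibly-shared tree is removed from exactly one of the two intervals (never both, never neither), and separately reconcile the boundary cases --- empty $f(\tilde p)$ or $f(q)$, a process holding a single tree, empty ranks interleaved between sharers, and the self-``keep'' situation $\tilde p=q$ --- with Corollaries~\ref{cor:valid1} and~\ref{cor:valid2} and with Definition~\ref{def:seforempty}. Once those endpoints are pinned down, equivalence with the four displayed inequalities is immediate.
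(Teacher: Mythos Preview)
Your proposal is correct and follows essentially the same approach as the paper: both reduce the membership test $q\in S_{\tilde p}$ to an interval-intersection check on two trimmed tree-index intervals whose four endpoints are read in $O(1)$ from the offset arrays. You are somewhat more explicit than the paper in invoking Corollaries~\ref{cor:valid1}--\ref{cor:valid2} to justify why at most one tree must be trimmed on each side, and you elect to treat $\tilde p=q$ separately rather than folding it into the same four inequalities, but these are presentational choices rather than a different route.
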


\begin{remark}
Let $p$ be a process that is not empty in the new partition.
For symmetry reasons,
$R_p$ contains exactly those processes $\tilde p$ with $r_\mathrm{first}\leq
\tilde p \leq r_\mathrm{last}$ and $p\in S_{\tilde p}$.
\end{remark}

Thus, in order to compute $S_p$, we can compute $s_\mathrm{first}$ and
$s_\mathrm{last}$ and then check for each rank $q$ in between whether the conditions of
Lemma \ref{lem:Spconstant} are fulfilled with $\tilde p = p$.
For each process this check only takes constant run time.

Now, to compute $R_p$ we can compute $r_\mathrm{first}$ and $r_\mathrm{last}$, and then
check for each rank $q$ in between whether $p\in S_q$.

These considerations provide the proof for Proposition \ref{prop:commpattern}.

\subsection{Face information for ghost trees}
\label{sec:partghosknow}

We identify five different types of possible face connections in a coarse
mesh:\\[1ex]
\begin{minipage}{0.4\textwidth}
\begin{enumerate}
  \item Local tree to local tree.
  \item Local tree to ghost tree.
  \item Ghost tree to local tree.
\end{enumerate}
\end{minipage}
\begin{minipage}{0.5\textwidth}
\begin{enumerate}
  \setcounter{enumi}{3}
  \item Ghost tree to ghost tree.
  \item Ghost tree to non-local and non-ghost tree.
\end{enumerate}
\end{minipage}\\[1ex]
There are several possible approaches to which of these face connections of a local coarse
mesh we could actually store. As long as each face connection between any two
neighbor trees is stored at least once globally, the information of the coarse
mesh over all processes is complete, and a single process could reproduce all
five types of face connection at any time, possibly using communication.
Depending on which of these types we store, the pattern for sending and
receiving ghost trees during repartitioning changes.
Specifically, the trees that will become a ghost on the receiving process
may be either a local tree or a ghost on the sending process.

When we use the maximum possible information of all five connections, we
have the most data available and can minimize the communication required.
In particular, from the non-local neighbors of a ghost and 
the partition table a process can compute which other processes this ghost
is also a ghost of and of which it is a local tree. With this information we can
ensure that a ghost is only sent once and only from a process that also sends
local trees to the receiving process.

The outline of the sending/receiving phase of the partition algorithm then
looks like this:

\begin{enumerate}
 \item For each $q\in S_p$: Send local trees that will be owned by $q$
   (following Paradigm \ref{par:sendtrees}).  
 \item Consider sending a neighbor of these trees to $q$ if
   it will be a ghost on $q$.
   Send one of these neighbors if both
 \begin{itemize}
   \item $p$ is the smallest rank among those that consider sending this neighbor as a ghost, and
   \item $p\neq q$ and $q$ does not consider sending this neighbor as a ghost to itself.
 \end{itemize}
 \item For each $q\in R_p$: Receive the new local trees and ghosts from $q$.
\end{enumerate}
In step 2 a process needs to know, given a ghost that is considered for
sending to $q$, which other processes consider sending this ghost to $q$.
This can be calculated without further communication from the face-neighbor information
of the ghost. Since we know for each ghost the global index of each of its
neighbors, we can check whether any of these neighbors is currently local on a
different process $\tilde p$ and will be sent to $q$ by $\tilde p$. If so, we
know that $\tilde p$ considers sending this ghost to $q$.

Using this method, each local tree and ghost is sent only once to each receiver,
and only those processes send that send local trees anyway,
thus we have a minimum number of communications and data movement.
Storing less information would either increase the number of communicating 
processes or the amount of data that is communicated.

Suppose we would not store the face connection type 5, thus for ghost trees 
we do not have the information with which non-local trees it is connected.
With this face information we can use a communication pattern such that
each ghost is only received once by a process $q$, by sending the new ghost
trees from a process that currently has it as a local tree (taking Paradigm
\ref{par:sendtrees} into account). However, this designated process might not
be an element of $R_q$, in which case additional processes would communicate.

If we only store the local tree face information, types 1 and 2, then
we have minimal control over the ghost face connections.
Nevertheless, we can define the partition algorithm by specifying that if a 
process $p$ sends local trees to a process $q$, it will send all neighbors of
these local trees as potential ghosts to $q$.
The process $q$ is then responsible for deleting those trees that it received
more than once.
With this method the number of communicating processes would be the same but
the amount of data communicated would increase.

\begin{figure}
  \center
  \begin{minipage}{0.8\textwidth}
    \includegraphics{./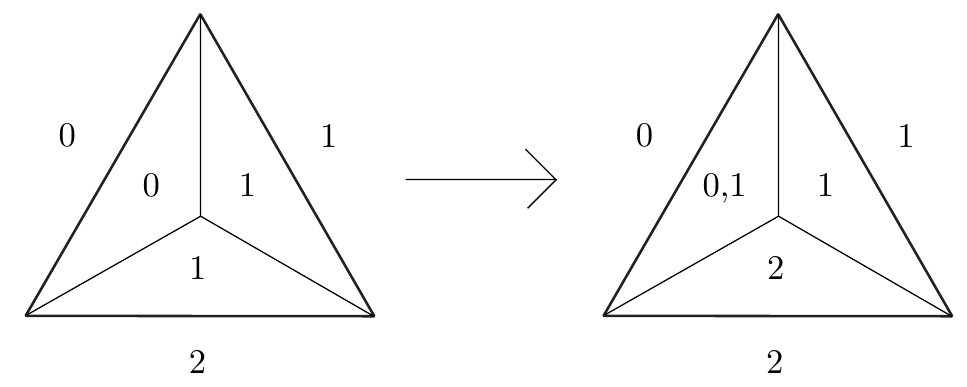}
  \end{minipage}\\[4ex]

  \begin{minipage}{\textwidth}
    \begin{tabular}{|c||c|c|c||c|c|c||c|c|c|}
      \hline
     &\multicolumn{3}{c||}{1, 2} &
      \multicolumn{3}{c||}{1, 2, 3, 4} &
      \multicolumn{3}{c||}{1, 2, 3, 4, 5} \\
      \hline
     p &0&1&2          & 0&1&2&            0&1&2\\ \hline 
     0&0(1,2)&0(2)&--- & 0&0&(0)          &0(1,2) & 0 & ---\\[0.8ex]
     1&---&1(2)&2(0,1) & (1,2) & 1(2) & 2(1)&--- & 1(2) & 2 (0,1)\\

      \hline
    \end{tabular}
  \end{minipage}
  \caption{Repartitioning example of a coarse mesh showing the communication
  patterns controlled by the amount of face information available.
  Top: A coarse mesh of three trees is repartitioned. The numbers outside of the
  trees are their global indices. The numbers inside of each tree
  denote the processes that have this tree as a local tree.
  At first process 0 has tree 0, process 1 trees 1 and 2, and process 2 no
  trees as local trees.  After repartitioning process 0 has tree 0, process 1
  trees 0 and 1, and process 2 tree 2 as local trees.
  Bottom: The table shows for each usage of face connection types which processes
  send which data.  The row of process $i$ shows in column $j$ which local
  trees $i$ sends to $j$, and---in parentheses---which ghosts it sends to $j$.
  Using face connection types 1--4 we use more communication partners (process
  0 sends to process 2 and process 1 to process 0) than with all five types.
  Using types 1 and 2 only, duplicate data is sent (process 0 and
  process 1 both send the ghost tree 2 to process 1).}
  \label{fig:comparelevels}
  \end{figure}

  We give an example comparing the three face information strategies in Figure
\ref{fig:comparelevels}.
  To minimize the communication and overcome the need for postprocessing
  steps, it is thus recommended to store all five types of face connections.

\section{Implementation}

Let us begin by outlining the main data structures for trees, ghosts, and the
coarse mesh, and continue with a section on how to update the local tree and
ghost indices.
After this we present the partition algorithm to repartition a given coarse mesh
according to a pre-calculated partition array.
We emphasize that the coarse mesh data stores pure connectivity.
In particular, it does \emph{not} include the forest information, that is leaf
elements and per-element payloads, which are managed by separate, existing
algorithms.

\subsection{The coarse mesh data structure}

Our data structure \texttt{cmesh} that describes a (partitioned) coarse mesh
has the entries:
\begin{itemize}
 \item \texttt{O} --- An array storing the current partition table, see
   Definition \ref{def:offsetarray}.
 \item $n_p$ --- The number of local trees on this process.
 \item $n_{\mathrm{ghosts}}$ --- The number of ghost trees on this process.
 \item \texttt{trees} --- A structure storing the local trees in order of their global index.
 \item \texttt{ghosts} --- A structure storing the ghost trees in no particular order.
\end{itemize}

We use $32$-bit signed integers for the local tree counts in \texttt{trees} and
\texttt{ghosts} and $64$-bit integers for global counts in \texttt O.
This limits the number of trees per process to $2^{31}-1\cong 2\times 10^9$.
However, even with an overly optimistic memory usage of only $10$ bytes per tree,
storing that many trees would require about $18.6$GB of memory per process.
Since on most distributed machines the memory per process is indeed much smaller,
restricting to $32$-bit integers does not effectively limit the local number of 
trees.
In presently unimaginable cases, we can still switch to $64$-bit integers.

  We call the index of a local tree inside the \texttt{trees} array the \emph{local index}
  of this tree.
  Analogously,
  we call the index of a ghost in \texttt{ghosts} the \emph{local index} of that ghost.
On process $p$, we compute the global index $k$ of a tree in \texttt{trees}
from its local index $\ell$ and the global index $k_p$ of the first local tree
and vice versa, since
\begin{equation} 
  k = k_p + \ell
  .
\end{equation}
This allows us to address local trees with their local indices
using $32$-bit integers.

Each \texttt{tree} in the array \texttt{trees} stores the following data:
\begin{itemize}
  \item \texttt{eclass} --- The tree's type as a small number (triangle,
    quadrilateral, etc.).
 \item \texttt{tree\_to\_tree} --- An array storing the local tree and ghost
   neighbors along this tree's faces. See Section \ref{sec:faceneigh}.  
 \item \texttt{tree\_to\_face} --- An array encoding for each face the neighbor
   face number and the orientation of the face connection.
   See Section \ref{sec:orientation}.
 \item \texttt{tree\_data} --- A pointer to additional data that we store for the
   tree, for example geometry information or boundary conditions defined by an
   application.
\end{itemize}

The $i$-th entry of \texttt{tree\_to\_tree} array encodes the tree number of the
face-neighbor at face $i$ using an integer $k$ with $0\leq k <
n_p+n_{\mathrm{ghosts}}$.  If $k<n_p$, the neighbor is the local tree with
local index $k$.  Otherwise the neighbor is the ghost with local
index $k - n_p$.

We do not allow a face to be connected to itself. Instead, we use such a
connection in the face-neighbor array to indicate a domain boundary.
However, a tree can be connected to itself via two different faces.
This allows for one-tree periodicity, as say in a 2D torus consisting of a
single square tree.

Each \texttt{ghost} in the array \texttt{ghosts} stores the following data:
\begin{itemize}
 \item \texttt{Id} --- The ghost's global tree index.
 \item \texttt{eclass} --- The type of the ghost tree.
 \item \texttt{tree\_to\_tree} --- An array giving for each face the global number
 of its face-neighbor.
 \item
   \texttt{tree\_to\_face} ---
   As above.
\end{itemize}

Since a ghost stores the global number of all of its face-neighbor trees,
we can locally compute all other processes that have this tree as a ghost by
combining the information from \texttt{O} and \texttt{tree\_to\_tree}.

\subsection{Updating local indices}

After partitioning, the local indices of the trees and ghosts change.
The new local indices of the local trees are determined by subtracting the global
index of the first local tree from the global index of each local tree.
The local indices of the ghosts are given by their position in the 
data array. 

Since the local indices change after repartitioning, we update 
the face-neighbor entries of the local trees to store those new values.
Because a neighbor of a tree can either be a local tree or a ghost on the 
previously owning process $\tilde p$ and become either a tree or a ghost on the new 
owning process $p$, there are four cases that we shall consider.

We handle these four cases in two phases, the first phase is carried out on
process $\tilde p$ before the tree is sent to $p$. In this phase we change
all neighbor entries of the trees that become local.
The second phase is done on $p$ after the tree was received from $\tilde p$.
Here we change all neighbor entries belonging to trees that become ghosts.

In the first phase,
$\tilde p$ has information about the first local tree on $\tilde p$ in the
old partition, its global number being $k_{\tilde p}$. Via
\texttt{O'} it also knows $k_p^\mathrm{new}$, the global index
of $p$'s first tree in the new partition.
Given a local tree on $\tilde p$ with local index $\tilde k$ in the old
partition we compute its new local index $k$ on $p$ as
\begin{equation}
  \label{eq:idupdate1}
  k = k_{\tilde p} + \tilde k - k_p^\mathrm{new},
\end{equation}
which is its global index minus the global index of the new first local tree.
Given a ghost $g$ on $\tilde p$ that will be a local tree on $p$, we
compute its local tree number as
\begin{equation}
  \label{eq:idupdate2}
 k = g.\mathrm{Id} - k_p^\mathrm{new}.
\end{equation}

In the second phase $p$, has received all its new trees and ghosts and
thus can give the new ghosts local indices to be stored in the \texttt{neighbors}
fields of the trees.
We do this by parsing, for each process $\tilde p \in R_p$ (in ascending order),
its ghosts and increment a counter.
For each ghost we parse its neighbors for local trees, and for any of these
we set the appropriate value in its \texttt{neighbors} field.

\subsection{\texttt{Partition\_cmesh} --- Algorithm
\ref{alg:partgiven}}

The input is a partitioned coarse mesh $C$ and a new partition layout \texttt{O'},
and the output is a new coarse mesh $C'$ that carries the same information as
$C$ and is partitioned according to \texttt{O'}.

This algorithm follows the method described in Section \ref{sec:partghosknow}
and is separated in two main phases, the \texttt{sending phase}
and the \texttt{receiving phase}.
In the former we iterate over each process $q\in S_p$ and decide which local trees
and ghosts we send to $q$.
Before sending, we carry out phase one of the update of the local tree numbers.
Subsequently, we receive all trees and ghosts from the processes in
$R_p$ and carry out phase two of the local index updating.

In the \texttt{sending phase} we iterate over the trees that we send to $q$.
For each of these trees we check for each neighbor (local tree and ghost)
whether we send it to $q$ as a ghost tree. This is the item 2 in the list 
of Section \ref{sec:partghosknow}.
The function \texttt{Parse\_neighbors} decides for a given local tree or ghost
neighbor whether it is sent to $q$ as a ghost.

\begin{algorithm}
\SetVlineSkip{1pt} 
\DontPrintSemicolon
\caption{\texttt{Partition\_cmesh}(\texttt{cmesh} $C$, \texttt{partition}
\texttt{O'})}  
\label{alg:partgiven}
  $p\gets$ this process 
  
  From \texttt{C.O} and \texttt{O'} determine $S_p$ and
  $R_p$. \Comment{See Section \ref{sec:detSpRp}}

  \tcc*[l]{Sending phase}
 
 \For {each $q\in S_p$}
 {
 $G\gets\emptyset$  \Comment{trees $p$ sends as ghosts to $q$}
  
  $s \gets$ first local tree to send to $q$.
  
  $e \gets$ last local tree to send to $q$.
  
  $T \gets \set{C.\texttt{trees}[s],\dots,C.\texttt{trees}[e]}$\Comment{local trees $p$ sends to $q$}

  \For {$k\in T$}
  {
  \texttt{Parse\_neighbors} ($C$, $k$, $p$, $q$, $G$, \texttt{O'})
  }
  \texttt{update\_tree\_ids\_phase1} $(T\cup G)$\Comment{See equations \eqref{eq:idupdate1} and \eqref{eq:idupdate2}}

  Send $T\cup G$ to process $q$
  }

  \tcc*[l]{Receiving phase}
 \For {each $q\in R_p$}
 {
   Receive $T[q] \cup G[p]$ from process $p$
 }
 
 $C'.\mathrm{trees}\gets \displaystyle\bigcup_{R_p} T[q]$
 \Comment{The updated tree array}
  
 $C'.\mathrm{ghosts}\gets \displaystyle\bigcup_{R_p} G[q]$
 \Comment{The updated ghost array}
  
 \texttt{update\_tree\_ids\_phase2} $(C.\mathrm{ghosts})$
 
 $C'.\texttt{O}\gets \texttt{O'}$

 \Return C'
   \vspace{1ex}
   \hrule
   \vspace{1ex}
 
 \tcc*[l]{decide which neighbors of $k$ to send as a ghost to $q$}
 \setcounter{AlgoLine}{0}
 \textbf{Function} \texttt{Parse\_neighbors}(\texttt{cmesh} $C$, \texttt{tree} $k$,
 \texttt{processes} $p$, $q$, \texttt{Ghosts} $\ast G$, \texttt{partition} \texttt{O'})

   \For {$u\in k.\texttt{tree\_to\_tree}\ohne\set{C.\texttt{trees}[s],\dots,C.\texttt{trees}[e]}$
   }
   {
    \eIf {$0\leq u<k_{p}$
    $\algoand \texttt{Send\_ghost}(C,\texttt{ghost}(u),q,\texttt{O'})$}
    {
    \If {$u+k_p\notin f'(q)$}
        {
          $G\gets G\cup \set{\texttt{ghost}(u)}$ \Comment{local tree $u$ gets ghost of $q$}
        }
    }
    (\IfComment{$k_{p}\leq u$})
    {
     $g\gets C.\mathrm{ghosts}[u-n_p]$
     \If {$g.Id\notin f(q)$ \algoand \texttt{Send\_ghost}$(C,g,q,\texttt{O'})$}
     {       
	  $G\gets G\cup \set{g}$ 
	  \Comment{$g$ is a ghost of $q$}
     }
    }   
   }
   \vspace{1ex}
   \hrule
   \vspace{1ex}
  \tcc*[l]{Subroutine to decide whether to send a ghost or not}
 \setcounter{AlgoLine}{0}
 \textbf{Function} \texttt{Send\_ghost}(\texttt{cmesh} $C$, \texttt{ghost} $g$, \texttt{process} $q$,
 \texttt{partition} \texttt{O'})
 $S\gets\emptyset$\\
 \For {$u\in g.\texttt{tree\_to\_tree}$} 
 {    
   \For {$q'$ with $u$ is a local tree of $q'$}
   {
   \If {$q'$ sends $u$ to $q$}
    {
      $S\gets S\cup \set{q'}$ 
    }
  }
 }
 \eIf {$q\notin S$ \algoand $p = \min S$} 
 {
  \Return True \Comment{$p$ is the smallest rank sending trees to $q$}
 }
 {
  \Return False  
 }
\end{algorithm}

\section{Numerical Results}

The run time results that we present here have been obtained using the Juqueen
supercomputer at Forschungszentrum J{\"u}lich, Germany.
It is an IBM BlueGene/Q system with 28,672 nodes consisting of IBM PowerPC
A2 processors at 1.6~GHZ with 16~GB RAM per node \cite{Juqueen}.
Each compute node has 16 cores and is capable of running up to 64 MPI
processes using multithreading.

\subsection{How to obtain example meshes}

To measure the performance and mem\-ory consumption of the algorithms presented
above, we would like to test the algorithms on coarse meshes
that are too big to fit into the memory of a single process,
which is 1 GB on Juqueen if we use 16 MPI ranks per node.
We consider three approaches to do construct such meshes:
\begin{enumerate}
\item Use an external parallel mesh generator.
\item Use a serial mesh generator on a large-memory machine, transfer the
  coarse mesh to the parallel machine's file system, and read it using
  (parallel) file I/O.
\item
  Create a big coarse mesh by forming the disjoint union of smaller coarse
  meshes over the individual processes.
\end{enumerate}

Due to a lack of availability of open source parallel mesh
generators
we restrict ourselves to the second and third method.
These have the advantage that we can start with initial coarse meshes that fit
into a single process' memory, such that we can work with serial mesh
generating software.
In particular we use \texttt{gmsh}, \texttt{Tetgen} and \texttt{Triangle}
for simplicial meshes \cite{GeuzaineRemacle09, Shewchuk96, Si06}.

The third method is especially well suited for weak scaling studies.
The small coarse meshes can be created programmatically, communication-free on
each process, which we exploit below for tests with hexahedral trees.
They may be of same or different sizes between the processes.
In the former approach,
the number of local trees is, by definition, the same on each process.

We discuss two examples below: one examining purely the coarse mesh
partitioning without regard for a forest and its elements, and another in which
we drive the coarse mesh partitioning by a dynamically changing forest of
elements.
The latter manages shared trees and thus fully executes the algorithmic ideas
put forward above.

\subsection{Disjoint bricks}

In our first example we conduct a strong and weak scaling study of coarse mesh
partitioning and test the maximal number of cubical trees that we can support
before running out of memory.
To obtain reliable weak scaling results, we keep the same per-process
number of trees while increasing the total number of processes.
We achieve this by constructing
an $n_x \times n_y \times n_z$ brick of cubical trees on each process,
using three constant parameters $n_x, n_y$ and $n_z$.
We repartition this coarse mesh once, by the rule that each rank $p$ sends
43\% of its local trees to the rank $p+1$ (except the biggest rank $P-1$, which
keeps all its local trees).
We choose this odd percentage to create non-trivial boundaries between the
regions of trees to keep and trees to send.
See Figure \ref{fig:cmesh_partition1} for a depiction of the partitioned coarse
mesh on six processes.
The local bricks are created locally as \pforest connectivities with
\texttt{p4est\_connectivity\_new\_brick} and are then reinterpreted in parallel
as a distributed coarse mesh data structure.

\begin{figure}
\includegraphics[width=\textwidth]{./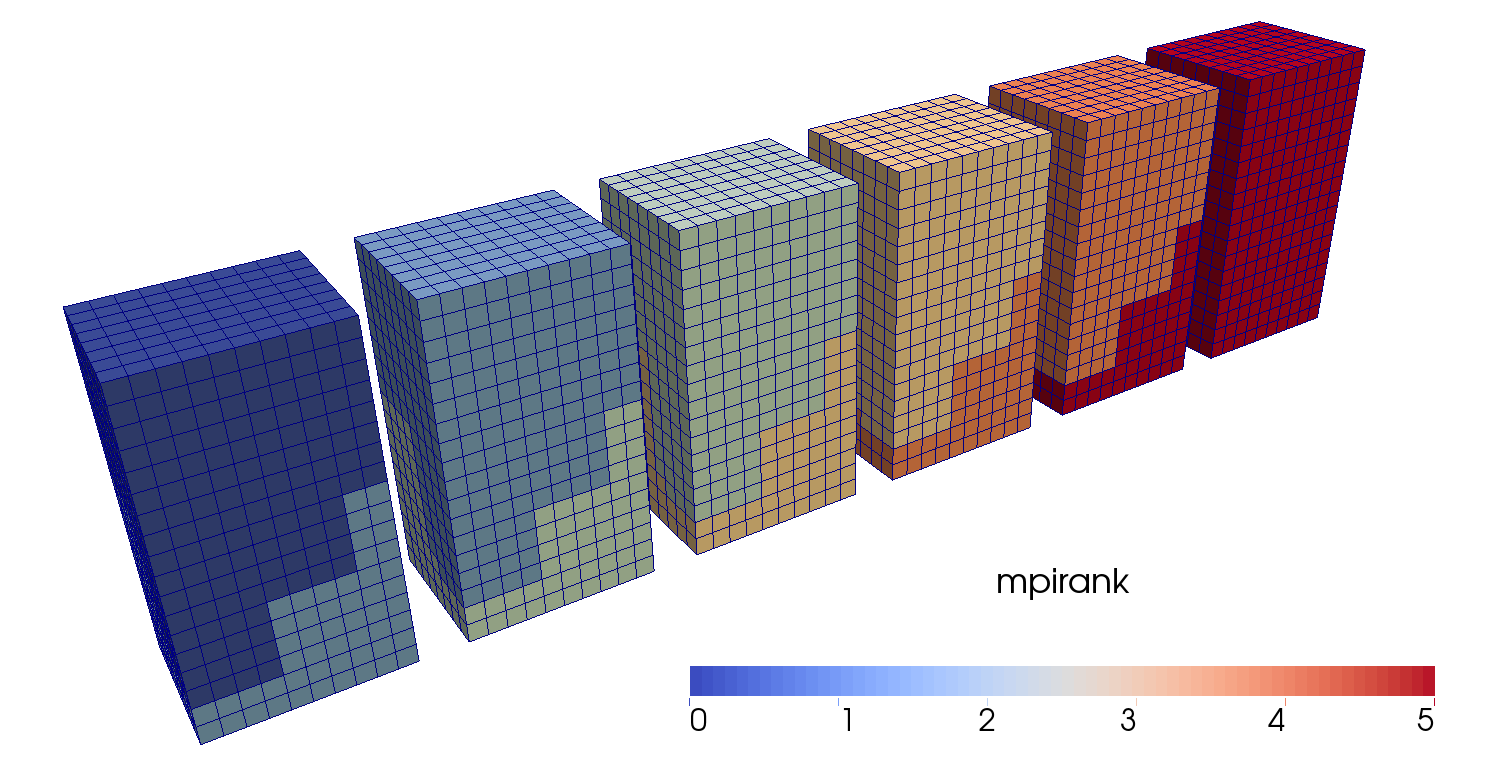}
\caption{Depicting the structure of the coarse
mesh for an example with six processes
that we use to measure the maximum possible mesh sizes and scalability.
Before partitioning, the coarse mesh local to each process is created as one
$n_x\times n_y \times n_z$ block of cubical trees.
We repartition the mesh such that each process sends 43\% of its local trees to
the next process.
The picture shows the resulting partitioned coarse mesh with parameters $n_x =
10, n_y = 18$ and $n_z = 8$ and color coded MPI rank.}
\label{fig:cmesh_partition1}
\end{figure}

\begin{figure}
\center
\begin{minipage}{0.49\textwidth}
\includegraphics[width=\textwidth]{./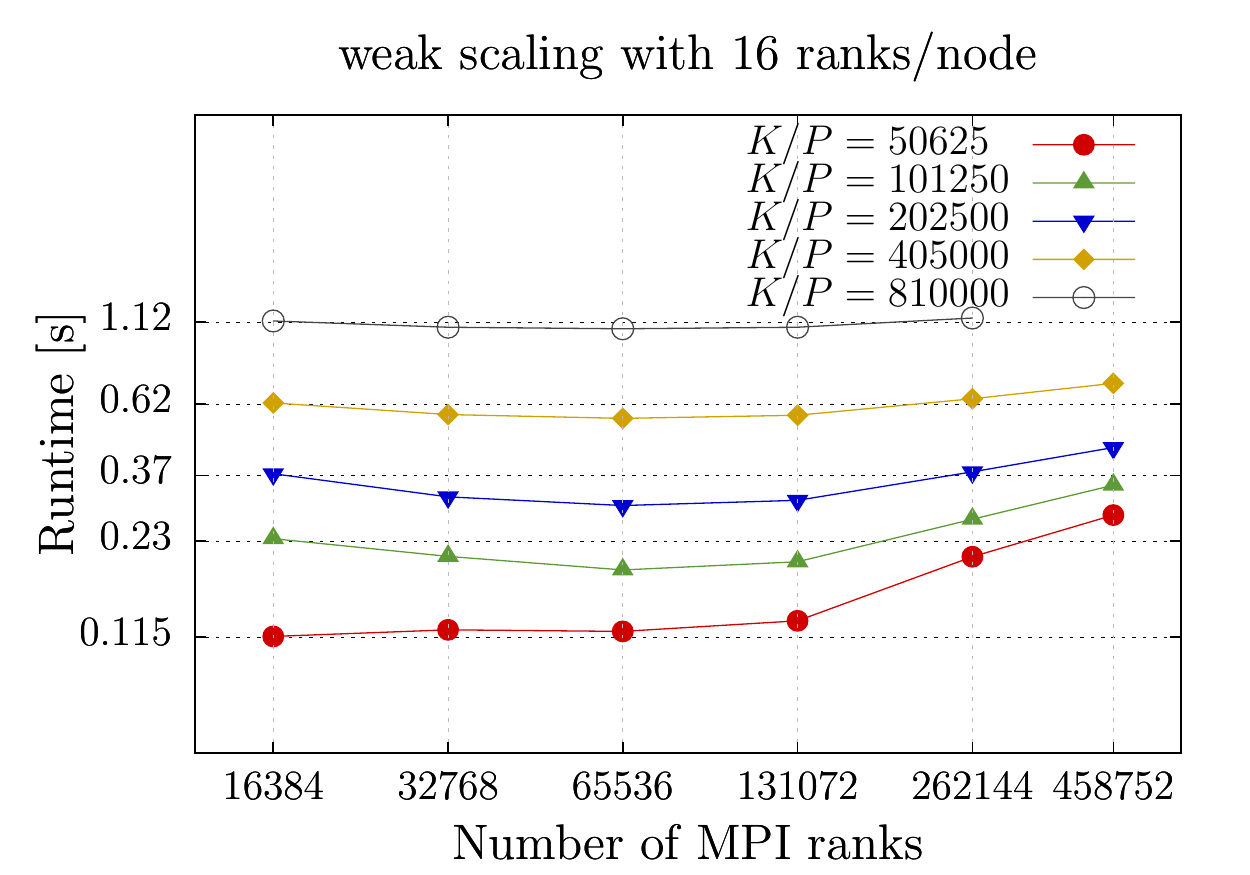}
\end{minipage}
\begin{minipage}{0.49\textwidth}
\includegraphics[width=\textwidth]{./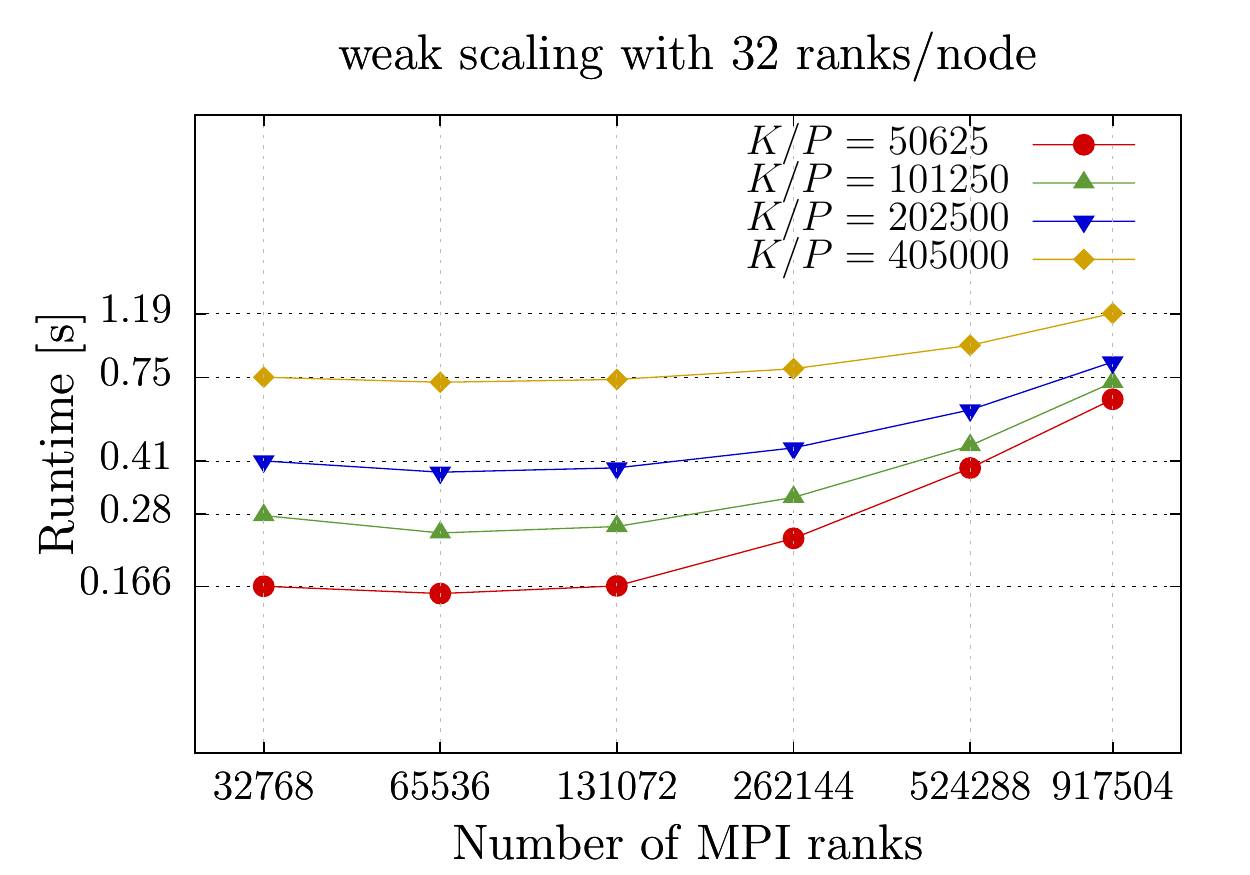}
\end{minipage}
\caption{Weak scaling of \texttt{partition} with disjoint bricks.
Left: 16 ranks per node. Right: 32 ranks per node.
We show the run times for the baseline on the y-axis.
On the left hand side the run time for the biggest 262,144-process run is
1.15 seconds, and
the run time for the biggest 458,752 run is 0.719 seconds. We obtain
an efficiency of 97.4\%, resp.\ 86.2\% compared to the baseline
of 16,384 MPI ranks.}
\label{fig:cmesh_disjoint_scaling}
\end{figure}

\begin{figure}
\center
\begin{minipage}{0.49\textwidth}
\includegraphics[width=\textwidth]{./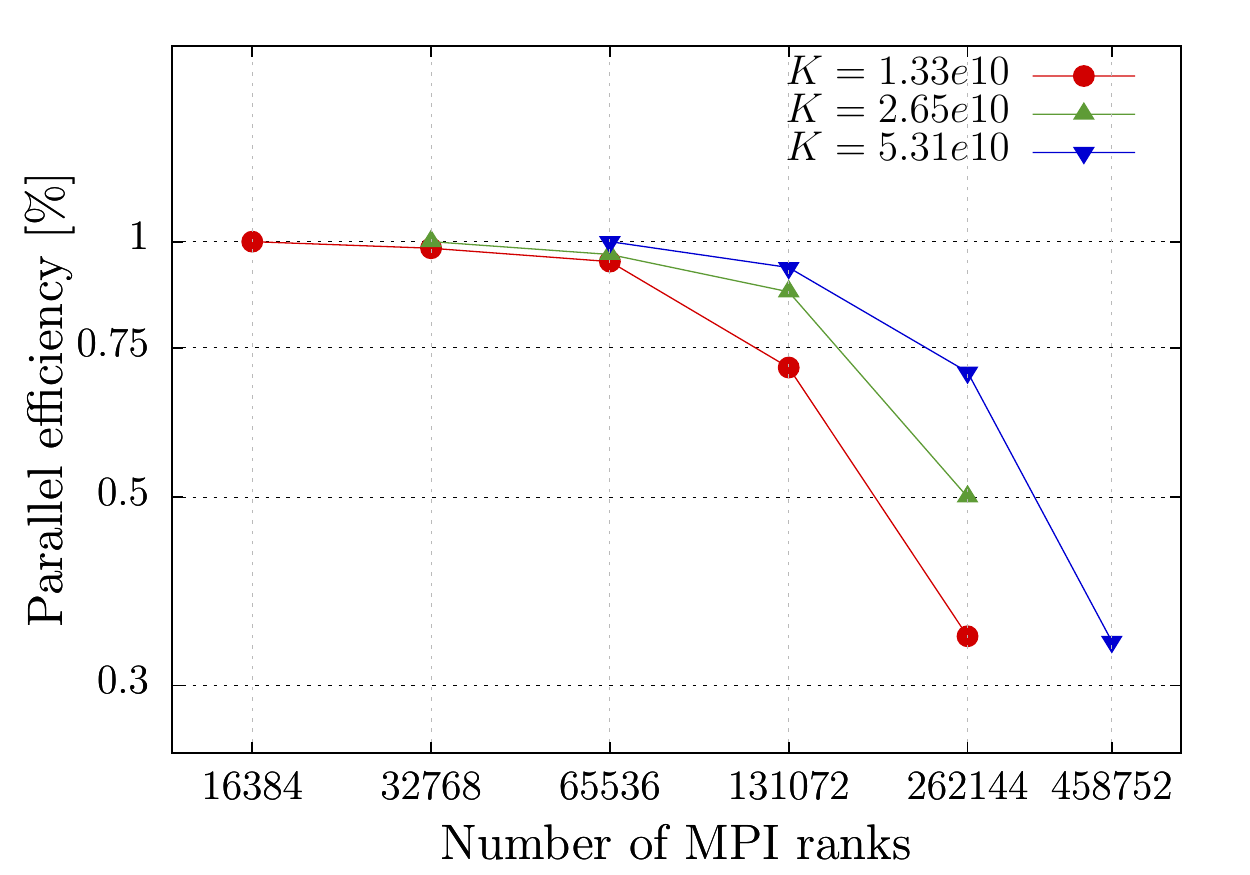}
\end{minipage}
\begin{minipage}{0.49\textwidth}
\includegraphics[width=\textwidth]{./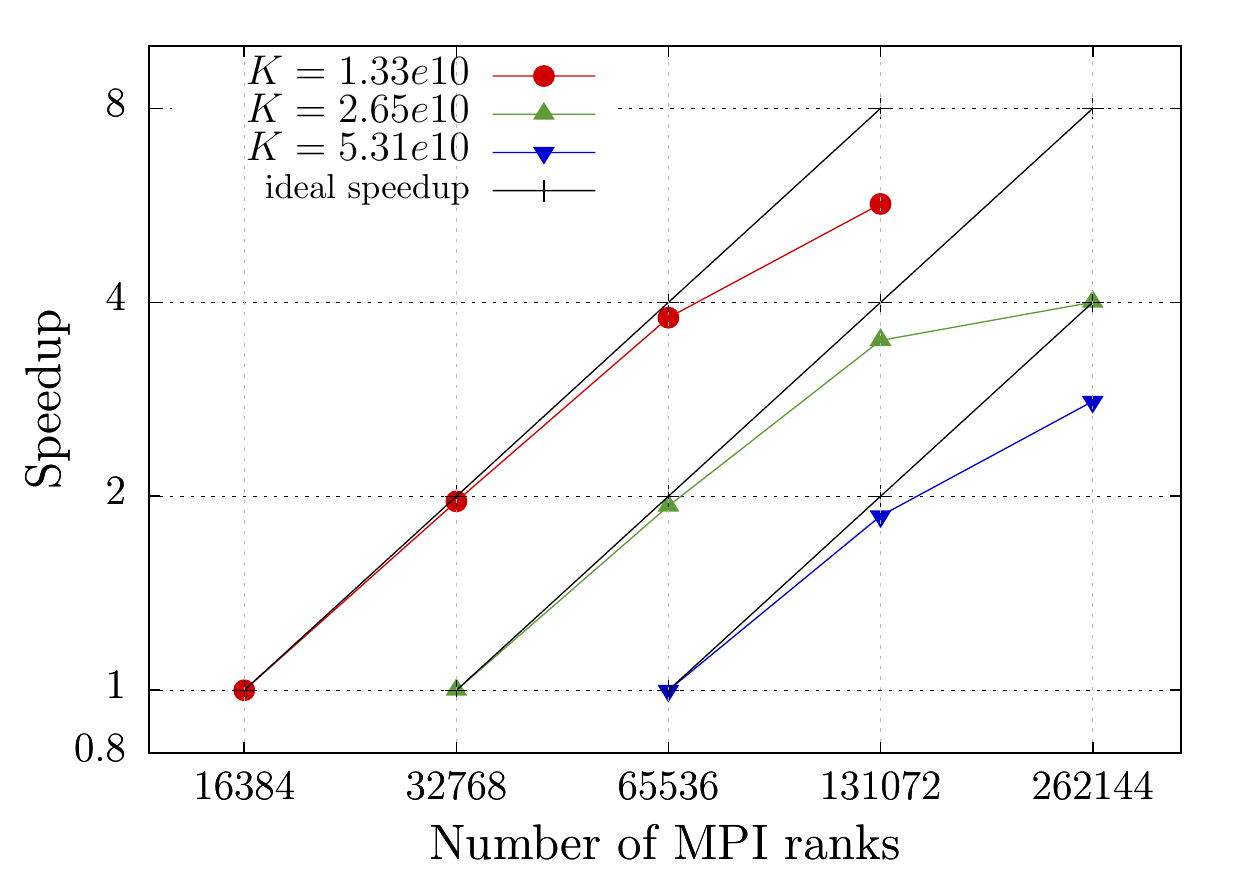}
\end{minipage}
\caption{Strong scaling of \texttt{Partition\_cmesh} for the disjoint bricks
example on Juqueen with 16 ranks per compute node,
for three runs with 1.33e10, 2.65e10, and 5.31e10 coarse mesh elements.
We show the parallel efficiency on the left and the speedup on the right.
The absolute run times for 262,144 processes are 0.206, 0.270 and 0.380 seconds.}
\label{fig:cmesh_disjoint_scaling_strong}
\end{figure}

\begin{table}
\center
\begin{tabular}{|rrrrr|}
 \multicolumn{5}{c}{Run time tests for \texttt{Partition\_cmesh}}\\[2ex] \hline
 \multicolumn{5}{|c|}{\mytabvspace 131072 MPI ranks (16 ranks per node)}\\ \hline
 \mytabvspace mesh size & per rank & trees (ghosts) sent & Time [s] & Factor\\ \hline
 \mytabvspace   6,635e9\phantom{0} & 50,625  & 21,768 (3,414) &  0.13 & --\\
               1,327e10            & 101,250 & 43,537 (5,504) &  0.20 & 1.53\\
               2,654e10            & 202,500 & 87,074 (6,607) &  0.31  & 1.56\\
               5,308e10            & 405,000 & 174,149  (11,381) & 0.57 & 1.85\\
               1,062e11            & 810,000 & 348,297  (22,335) & 1.08  & 1.89\\
 \hline\hline
 \multicolumn{5}{|c|}{917504 MPI ranks (32 ranks per node)}\\ \hline
 \mytabvspace mesh size & per rank & trees (ghosts) sent & Time [s] & Factor \\ \hline
 \mytabvspace   4.645e10 & 50,625   & 21,768 (3,413) & 0.64      & --\\
 \mytabvspace   9.290e10 & 101,250  & 43,537 (5,504) & 0.72   & 1.13 \\
 \mytabvspace   1.858e11 & 202,500  & 87,075 (6,607) & 0.84 & 1.12 \\
 \mytabvspace   3,716e11 & 405.000  & 174,150 (11,383) & 1.19    & 1.42 \\ \hline
\end{tabular}
\caption{The run times of \texttt{Partition\_cmesh} for 131,072 processes with
16 processes per node (top) and 917,504 processes with 32 processes per node
(bottom). The biggest coarse mesh that we created during the tests has 371
Billion trees. The last column is the quotient of the current run time divided
by the previous run time. Since we double the mesh size in each step, we would
expect an increase in run time of a factor of 2, which hints at parallel
overhead becoming negligible in the limit of many elements per process.%
}
\label{tab:131k917k}
\end{table}

We perform strong and weak scaling studies on up to 917,504 MPI ranks and display
our results in Figures~\ref{fig:cmesh_disjoint_scaling} and
\ref{fig:cmesh_disjoint_scaling_strong} and Table~\ref{tab:131k917k}.
We show the results of one study with 16 MPI ranks per compute node, thus 1GB
available memory per process, and one with 32 MPI ranks per compute node,
leaving half of the memory per process.
In both cases we measure run times for different mesh sizes per process.
We observe that even for the biggest meshes of 405k resp.\ 810k coarse mesh elements
per process the absolute run times of partition are below 1.2 seconds.
Furthermore, we measure a weak scaling efficiency of 97.4\% for the 810k
mesh on 262,144 processes and 86.2\% for the 405k
mesh on 458,752 processes.
The biggest mesh that we created is partitioned between 917,504 processes and
uses 405k mesh elements per process for a total of over 371e9 coarse mesh
elements.

Additionally, in Table \ref{tab:smallmeshes} we show run times for
\texttt{Partition\_cmesh} with small coarse meshes, where the number of
elements is roughly on the order of the number of processes.
These tests show that
for such small meshes the run times are in the order of milliseconds.
Hence, for small meshes there is no disadvantage in using a partitioned coarse
mesh over a replicated one, when each process holds a full copy.

\begin{table}
\center
\begin{tabular}{|r|r|r|}\hline
\# MPI ranks  & mesh elements & run time [s] \\\hline
1024 &   4096 & 0.00136\\ 
1024 &   8192 & 0.00149\\
1024 &  16384 & 0.00142\\
  64 &    105 & 0.00122\\ 
  32 &    105 & 0.00789 \\\hline

64 & 3200  &  0.000293\\
64 & 19200 &  0.000865\\\hline
\end{tabular}
\caption{Run times for \texttt{Partition\_cmesh} for relatively small coarse
meshes. The bottom part of the table was not computed on Juqueen but on a
local institute cluster of 78 nodes with 8 Intel Xeon CPU E5-2650 v2 @ 2.60GHz
each.}
\label{tab:smallmeshes}
\end{table}

\subsection{An example with a forest}
In this example we partition a tetrahedral coarse mesh according to a parallel
forest of fine elements.
Opposed to the previous test, where we exploited the
maximum possible coarse element numbers, we now test mesh sizes that can occur in
realistic use cases.

When simulating shock waves or two-phase flows, there is often an interface along
which a finer mesh resolution is desired in order to minimize computational
errors.
Motivated by this example, we create the forest mesh as an initial uniform
refinement of the coarse mesh with a specified level $\ell$ and refine it in a
band along an interface defined by a plane in $\IR^3$ up to a maximum
refinement level $\ell+k$.
As refinement rule we use 1:8 red refinement \cite{Bey92} together with the
tetrahedral Morton space-filling curve \cite{BursteddeHolke16}. We move the 
interface through the domain with a constant velocity. Thus, in each time
step the mesh is refined and coarsened, and therefore we
repartition it to maintain an optimal load-balance. We measure run times for
both coarse mesh and forest mesh partitioning for three time steps.

Our coarse mesh consists of tetrahedral trees modelling a brick with spherical
holes in it. To be more precise, the brick is built out of $n_x\times n_y
\times n_z$ tetrahedralized unit cubes and each of those has one spherical hole
in it; see Figures \ref{fig:cmesh_partition2} and \ref{fig:forest_partition2}
for a small example mesh.

We create the mesh in serial using the generator \texttt{gmsh}
\cite{GeuzaineRemacle09}.
We read the whole file on a single process, and thus
use a local machine with 1 terabyte memory for preprocessing.
On this machine we partition the
coarse mesh to several hundred processes and write one file for each partition.
This data is then transferred to the supercomputer.
The actual computation consists of reading the coarse mesh from files, creating
the forest on it, and partitioning the forest and the coarse mesh
simultaneously.
To optimize memory while loading the coarse mesh, we open at most one
partition file per compute node.

The coarse mesh that we use in these tests has parameters $n_x = 26, n_y = 24,
n_z = 18$, thus 11,232 unit cubes. Each cube is tetrahedralized with
about 34,150 tetrahedra, and the whole mesh consists of 383,559,464 trees.
In the first test, we create a forest of uniform level 1 and
maximal refinement level 2, and in the second a forest of uniform level 2
and maximal refinement level 3. The forest mesh in the first test consists of
approximately 2.6e9 elements. In the second test we also use a broader
band and obtain a forest mesh of 25e9 tetrahedra.

We show the run time results and further statistics for coarse mesh partitioning
in Table \ref{tab:brickexample_cmesh} and results
for forest partitioning in Table \ref{tab:brickexample_forest}.
We observe that the run time for \texttt{Partition\_cmesh} is between $0.10$ and
$0.13$ seconds, and that about 88\%, respectively 98\%, of all processes share
local trees with other processes.
The run times for forest partition are below $0.22$ seconds for the first example
and below $0.65$ seconds for the second example.
Thus, the overall run time for a complete mesh partition is below $0.5$ seconds
for the first and below $1$ second for the second example.

We also run a third test on 458,752 MPI ranks and refine the forest to a maximum
level of four. Here, the forest mesh has 167e9 tetrahedra, which we partition
in under 0.6 seconds. The coarse mesh partition routine runs in about 0.2
seconds. Approximately 60\% of the processes have a shared tree in the coarse
mesh. We show the results from this test in Table 
\ref{tab:brickexample_forest_458k}.

\begin{figure}
\center
\includegraphics[width=0.7\textwidth]{./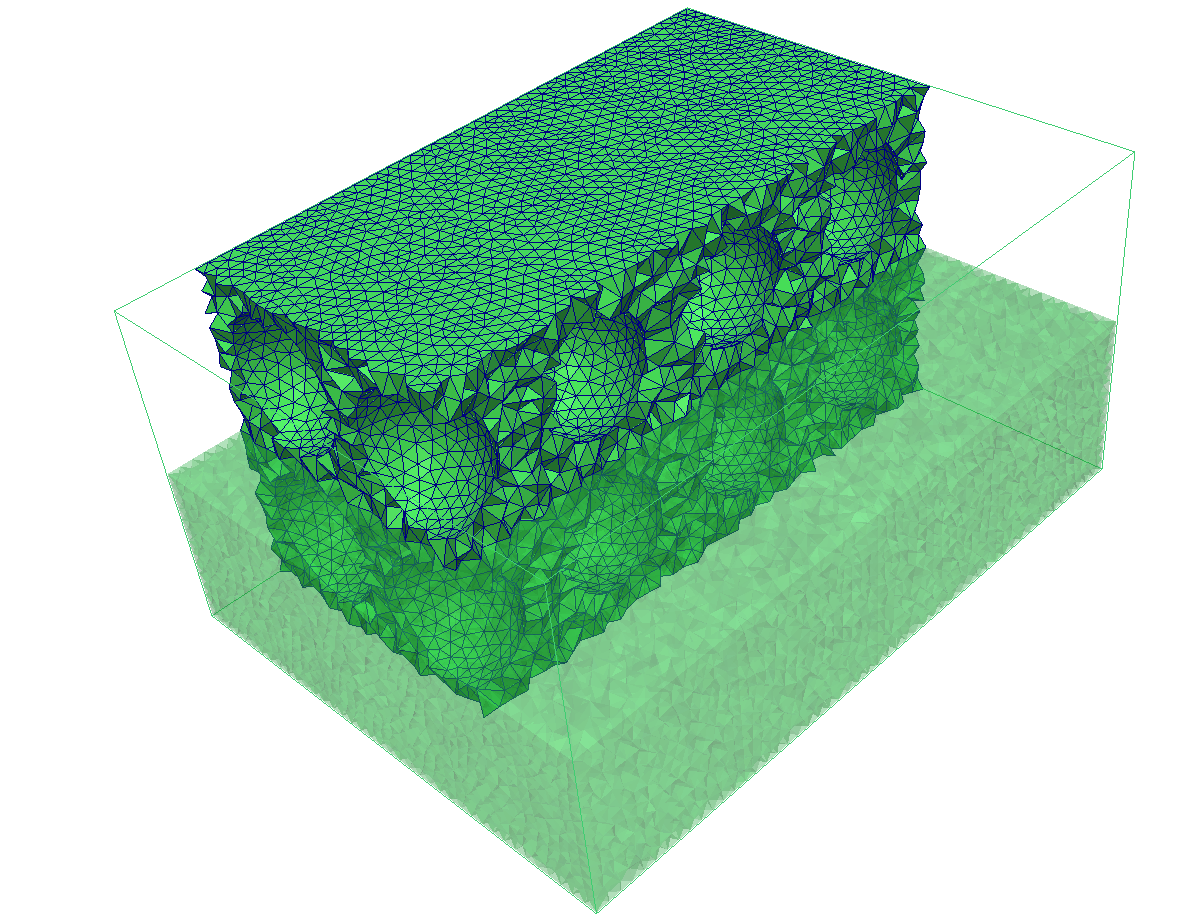}
\caption{The coarse mesh connectivity that we use for the
  partition tests motivated by an adapted forest.
It consists of $n_x\times n_y \times n_z$ cubes with one spherical hole each.
For this picture we use $n_x = 4,\, n_y = 3,\, n_z = 2$, and
each cube is triangulated with approximately 7575 tetrahedra.
For illustration purposes we show parts of the mesh opaque and other parts
are invisible.}
\label{fig:cmesh_partition2}
\end{figure}

\begin{figure}
\center
\includegraphics[width=0.3\textwidth]{./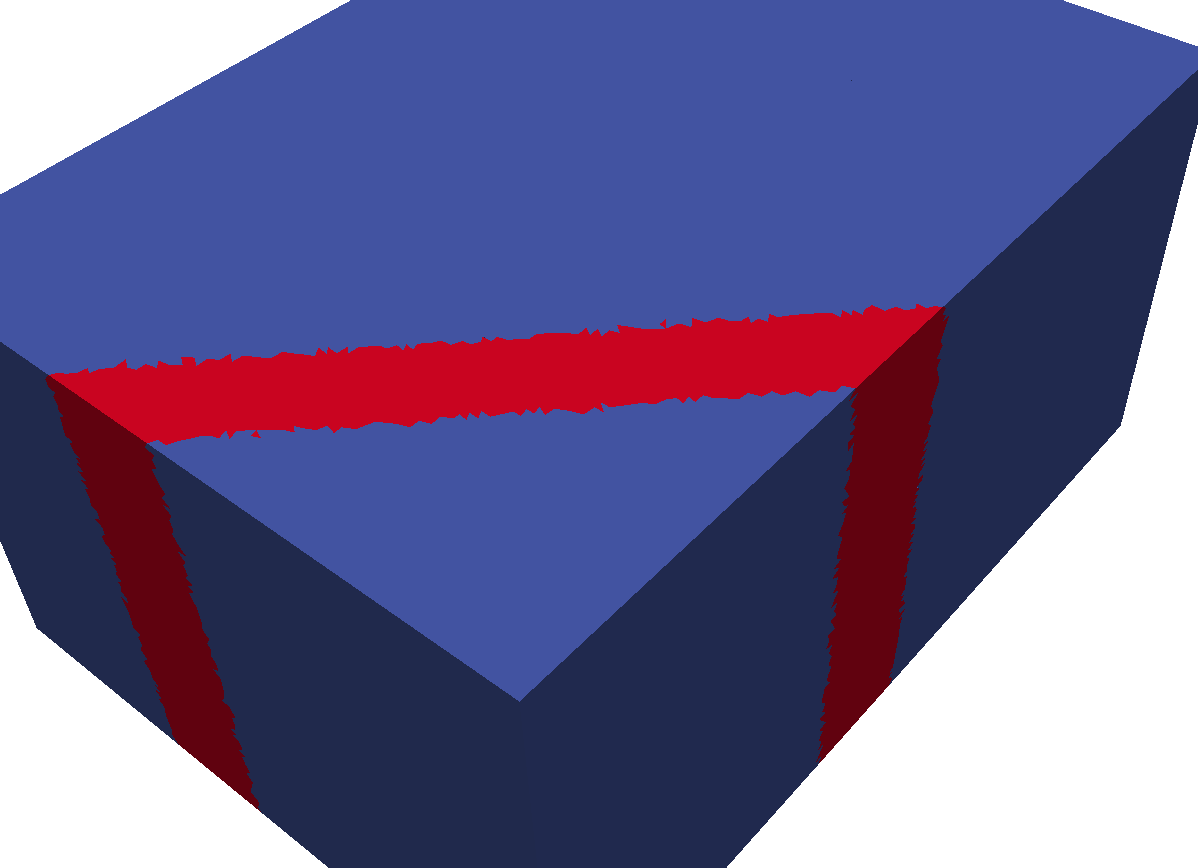}
\includegraphics[width=0.3\textwidth]{./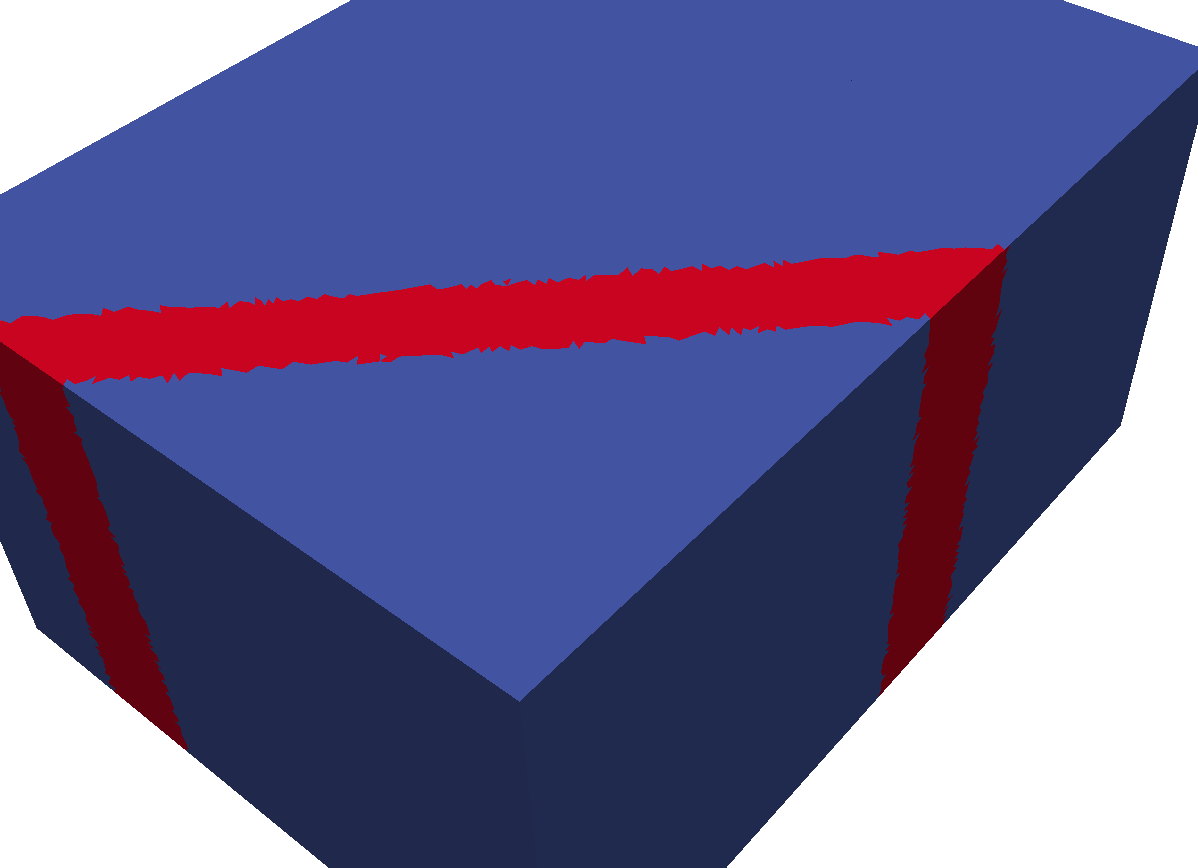}
\includegraphics[width=0.3\textwidth]{./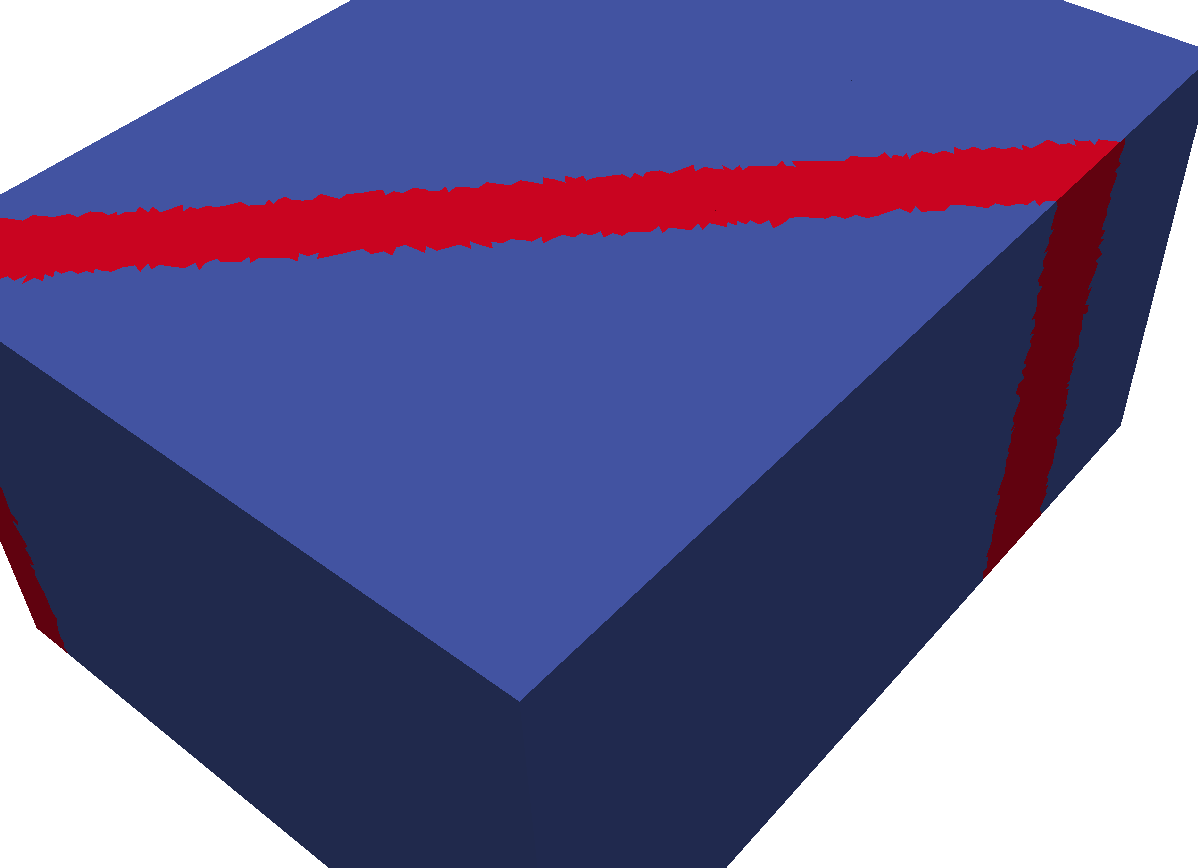}
\caption{An illustration of the band of finer forest mesh elements in the
example. The region of finer mesh elements moves through the mesh in each time
step. From left to right we see $t=1$, $t=2$, and $t=3$. In this illustration,
elements of refinement level 1 are blue and elements of refinement level 2 are
red.}
\label{fig:forest_partition2}
\end{figure}

\begin{table}
\center
\caption*{Coarse mesh partition on 8,192 MPI ranks}
\begin{tabular}{|c|r|r|r|r|r|}
\hline  
 t & trees (ghosts) sent & data sent [MiB] & $|S_p|$ & shared trees & run time [s]\\ \hline
1 & 25,116.7 (11,947.5) & 4.945 & 2.274 & 7178 & 0.103 \\
2 & 34,860.3 (16,854.9) & 6.880 & 2.753 & 7176 & 0.110 \\
3 & 36,386.0 (17,567.6) & 7.180 & 2.823 & 7182 & 0.112 \\\hline
1 & 39,026.2 (18,333.9) & 7.670 & 2.962 & 8096 & 0.128 \\
2 & 38,990.1 (18,268.0) & 7.660 & 2.946 & 8085 & 0.129 \\
3 & 38,941.7 (18,073.9) & 7.639 & 2.933 & 8085 & 0.128 \\
\hline
\end{tabular}
\caption{\texttt{Partition\_cmesh} with a coarse mesh of 324,766,336
tetrahedral trees on 8192 MPI ranks. We measure mesh repartition for three
time steps. For each we show process average values of the
number of trees ghosts, and total number of bytes that each process sends to
other processes. The average number of other processes to which a process sends
($|S_p|$) is in each test below three. We also give the total number of shared
trees in the mesh, 8191 being the maximum possible value. We round all numbers
to three significant figures. $1 \textrm{ MiB} = 1024^2\textrm{ bytes}.$}
\label{tab:brickexample_cmesh}
\end{table}

\begin{table}
\center

\caption*{Forest mesh partition on 8,192 MPI ranks}
\begin{tabular}{|c|r|r|r|r|}
\hline  
t & mesh size & elements sent & data sent [MiB]  & run time [s] \\ \hline
1 &  2,622,283,453 & 203,858 & 3.494      & 0.215\\  
2 &  2,623,842,241 & 281,254 & 4.824      & 0.215\\  
3 &  2,626,216,984 & 293,387 & 5.032      & 0.214\\\hline  
1 & 25,155,319,545 & 3,013,230 & 46.574 & 0.642\\ 
2 & 25,285,522,233 & 3,008,800 & 46.506 & 0.640\\ 
3 & 25,426,331,342 & 2,991,990 & 46.248 & 0.645\\ 

\hline
\end{tabular}
\caption{For the same example as in Table \ref{tab:brickexample_cmesh} we display
statistics and run times for the forest mesh partition. We show the total number
of tetrahedral elements, and the average count of elements and bytes that each
process sends to other processes (their count is the same as in Table
\ref{tab:brickexample_cmesh}). We round all numbers to three significant figures.}
\label{tab:brickexample_forest}
\end{table}

\begin{table}
\center
\caption*{Coarse mesh partition on 458,752 MPI ranks}
\begin{tabular}{|c|r|r|r|r|r|}\hline
t & trees (ghosts) sent & data sent [MiB] & $|S_p|$ & shared trees & run time [s]\\ \hline
1 & 703.976 (2444.11) & 0.267 & 2.986 & 280,339 & 0.207 \\
2 & 707.368 (2455.95) & 0.269 & 2.996 & 281,694 & 0.204 \\
3 & 707.904 (2457.70) & 0.269 & 2.997 & 281,900 & 0.204 \\
\hline
\end{tabular}

\caption*{Forest mesh partition on 458,752 MPI ranks}
\begin{tabular}{|c|r|r|r|r|}
\hline  
t & mesh size        & elements sent & data sent [MiB]  & run time [s] \\ \hline
1 & 167,625,595,829  & 362,863 & 5.548 & 0.522 \\  
2 & 167,709,936,554  & 364,778 & 5.577 & 0.578 \\  
3 & 167,841,392,949  & 365,322 & 5.585 & 0.567 \\\hline  
\end{tabular}
\caption{Run times for coarse mesh and forest partition for the brick with holes
on 458,752 MPI ranks. The setting and the coarse mesh are the same as in 
Table \ref{tab:brickexample_cmesh} despite that for the forest we use a initial
uniform level three refinement with a maximum level of four.}
\label{tab:brickexample_forest_458k}
\end{table}

\section{Conclusion}

In this manuscript we propose an algorithm that executes dynamic and in-core
coarse mesh partitioning.
In the context of forest-of-trees adaptive mesh refinement, the coarse mesh
defines the connectivity of tree roots, which is used in all neighbor query
operations between elements.
This development is motivated by simulation problems on complex domains that
require large input meshes.
Without partitioning, we will run out of memory around one million coarse
elements, and with static or out-of-core partitioning, we might not
have the flexibility to transfer the tree meta data as required by the change
in process ownership of the trees' elements, which occurs in every AMR cycle.
With the approach presented here, this can be performed with run times that are
significantly smaller than those for partitioning the elements, even
considering that SFC methods for the latter are exceptionally fast in absolute
terms.
Thus, we add little to the run time of all AMR operations combined.

Our algorithm guarantees both, that the number of fine mesh elements is
distributed equally among the processes using a space-filling curve, and that
each process can provide the necessary connectivity data for each of its fine
mesh elements.
We handle the communication without handshaking and develop a communication
pattern that minimizes data movement.
This pattern is calculated by each process individually, reusing information
that is already present.

Our implementation scales up to 917e3 MPI processes and up to 810e3 coarse mesh
elements per process, where the largest test case consists of .37e12 coarse
mesh elements.
What remains to be done is extending the partitioning of ghost trees to edge
and corner neighbors, since only face neighbor ghost trees are presently
handled.
It appears that the structure of the algorithm will allow this with little
modification.

\section*{Acknowledgments}

The authors gratefully acknowledge travel support by the Bonn Hausdorff Center
for Mathematics (HCM) funded by the Deut\-sche For\-schungs\-ge\-mein\-schaft
(DFG).
H.\ acknowledges additional support by the Bonn International Graduate School
for Mathematics (BIGS) as part of HCM.
We use the interface to the MPI3 shared array functionality written by Tobin
Isaac, University of Chicago, to be found in the files \texttt{sc\_shmem} of
the \texttt{sc} library.

The authors would like to thank the Gauss Centre for Supercomputing (GCS) for
providing computing time through the John von Neumann Institute for Computing
(NIC) on the GCS share of the supercomputer JUQUEEN
at J{\"u}lich Supercomputing Centre (JSC).
GCS is the alliance of the three
national supercomputing centres HLRS (Universit{\"a}t Stuttgart), JSC
(Forschungszentrum J{\"u}lich), and LRZ (Bayerische Akademie der
Wissenschaften), funded by the German Federal Ministry of Education and
Research (BMBF) and the German State Ministries for Research of
Baden-W{\"u}rttemberg (MWK), Bayern (StMWFK) and Nordrhein-Westfalen (MIWF).

\bibliographystyle{plain}
\bibliography{coarse_mesh.bbl}
\end{document}